\newcommand{\binomial}[2]{\genfrac{(}{)}{0pt}{}{#1}{#2}}
\newtheorem{definition}{Definition}
\newtheorem{lemma}[definition]{Lemma}
\newtheorem{theorem}[definition]{Theorem}
\newtheorem{corollary}[definition]{Corollary}
\def\squareforqed{\hbox{\rlap{$\sqcap$}$\sqcup$}}
\def\qed{\ifmmode\squareforqed\else{\unskip\nobreak\hfil
\penalty50\hskip1em\null\nobreak\hfil\squareforqed
\parfillskip=0pt\finalhyphendemerits=0\endgraf}\fi}
\def\endenv{\ifmmode\;\else{\unskip\nobreak\hfil
\penalty50\hskip1em\null\nobreak\hfil\;
\parfillskip=0pt\finalhyphendemerits=0\endgraf}\fi}
\newlength{\blank}
\newenvironment{proof}[1][{\hspace{-\blank}}]{{\noindent\textbf{Proof   }}}{\hfill\qed\vskip 0.5\baselineskip}
\mathchardef\ordinarycolon\mathcode`\:
\def\vcentcolon{\mathrel{\mathop\ordinarycolon}}
\newcommand{\nc}{\newcommand}
\nc{\rnc}{\renewcommand}
\nc{\lbar}[1]{\overline{#1}}
\nc{\bra}[1]{\langle#1|}
\nc{\ket}[1]{|#1\rangle}
\nc{\ketbra}[2]{|#1\rangle\!\langle#2|}
\nc{\braket}[2]{\langle#1|#2\rangle}
\nc{\proj}[1]{| #1\rangle\!\langle #1 |}
\nc{\avg}[1]{\langle#1\rangle}
\nc{\Rank}{\operatorname{rank}\,}
\nc{\smfrac}[2]{\mbox{$\frac{#1}{#2}$}}
\nc{\tr}{\operatorname{Tr}}
\nc{\cA}{{\cal A}}
\nc{\cB}{{\cal B}}
\nc{\cC}{{\cal C}}
\nc{\cD}{{\cal D}}
\nc{\cE}{{\cal E}}
\nc{\cF}{{\cal F}}
\nc{\cG}{{\cal G}}
\nc{\cH}{{\cal H}}
\nc{\cI}{{\cal I}}
\nc{\cJ}{{\cal J}}
\nc{\cK}{{\cal K}}
\nc{\cL}{{\cal L}}
\nc{\cM}{{\cal M}}
\nc{\cN}{{\cal N}}
\nc{\cO}{{\cal O}}
\nc{\cP}{{\cal P}}
\nc{\cR}{{\cal R}}
\nc{\cS}{{\cal S}}
\nc{\cT}{{\cal T}}
\nc{\cU}{{\cal U}}
\nc{\cX}{{\cal X}}
\nc{\cZ}{{\cal Z}}
\def\a{\alpha}
\def\g{\gamma}
\def\m{\mu}
\def\n{\nu}
\def\s{\sigma}
\nc{\RR}{{{\mathbb R}}}
\nc{\CC}{{{\mathbb C}}}
\nc{\FF}{{{\mathbb F}}}
\nc{\NN}{{{\mathbb N}}}
\nc{\ZZ}{{{\mathbb Z}}}
\nc{\PP}{{{\mathbb P}}}
\nc{\QQ}{{{\mathbb Q}}}
\nc{\UU}{{{\mathbb U}}}
\nc{\EE}{{{\mathbb E}}}
\nc{\id}{{\operatorname{id}}}
\nc{\be}{\begin{equation}}
\nc{\ee}{\end{equation}}
\nc{\bea}{\begin{eqnarray}}
\nc{\eea}{\end{eqnarray}}
\nc{\LO}{\text{LO}}
\nc{\LOCC}{\text{LOCC}}
\nc{\cLOCC}{{\overline{\text{LOCC}}}}
\nc{\SEP}{\text{SEP}}
\nc{\PPT}{\text{PPT}}
\nc{\sep}{\text{sep}}
\nc{\twist}{\text{twist}}
\nc{\te}{\otimes}
\newcommand*{\pro}[1]{\ket{#1}\!\bra{#1}}
\begin{document}

\singlespacing
\title{Witnessing entanglement by proxy}

\author{Stefan B{\"a}uml}
\email{stefan.baeuml@bristol.ac.uk}
\affiliation{Department of Mathematics, University of Bristol, Bristol BS8 1TW, UK}
\affiliation{F\'{\i}sica Te\`{o}rica: Informaci\'{o} i Fen\`{o}mens  Qu\`{a}ntics, Universitat Aut\`{o}noma de Barcelona, ES-08193 Bellaterra (Barcelona), Spain}
              
\author{Dagmar Bru\ss}
\email{bruss@thphy.uni-duesseldorf.de}
\affiliation{Institut f{\"u}r Theoretische Physik III, Heinrich-Heine-Universit{\"a}t D{\"u}sseldorf, Universit{\"a}tsstra{\ss}e 1, Geb{\"a}ude 25.32, D-40225 D{\"u}sseldorf, Germany}

\author{Marcus Huber}
\email{marcus.huber@univie.ac.at}
\affiliation{F\'{\i}sica Te\`{o}rica: Informaci\'{o} i Fen\`{o}mens  Qu\`{a}ntics, Universitat Aut\`{o}noma de Barcelona, ES-08193 Bellaterra (Barcelona), Spain}
\affiliation{ICFO-Institut de Ci\`{e}ncies Fot\`{o}niques, Mediterranean Technology Park, 08860 Castelldefels (Barcelona), Spain}

\author{Hermann Kampermann}
\email{kampermann@thphy.uni-duesseldorf.de}
\affiliation{Institut f{\"u}r Theoretische Physik III, Heinrich-Heine-Universit{\"a}t D{\"u}sseldorf, Universit{\"a}tsstra{\ss}e 1, Geb{\"a}ude 25.32, D-40225 D{\"u}sseldorf, Germany}

\author{Andreas Winter}
\email{andreas.winter@uab.cat}
\affiliation{ICREA - Instituci\'{o} Catalana de Recerca i Estudis Avan\c{c}ats, ES-08010 Barcelona, Spain}
\affiliation{F\'{\i}sica Te\`{o}rica: Informaci\'{o} i Fen\`{o}mens Qu\`{a}ntics, Universitat Aut\`{o}noma de Barcelona, ES-08193 Bellaterra (Barcelona), Spain}

\begin{abstract}
Entanglement is a ubiquitous feature of low temperature systems and believed to be highly relevant for the dynamics of condensed matter properties and quantum computation even at higher temperatures. The experimental certification of this paradigmatic quantum effect in macroscopic high temperature systems is constrained by the limited access to the quantum state of the system. In this paper we show how macroscopic observables beyond the energy of the system can be exploited as proxy witnesses for entanglement detection. Using linear and semi-definite relaxations we show that all previous approaches to this problem can be outperformed by our proxies, i.e. entanglement can be certified at higher temperatures without access to any local observable. For an efficient computation of proxy witnesses one can resort to a generalized grand canonical ensemble, enabling entanglement certification even in complex systems with macroscopic particle numbers.
\end{abstract}

\maketitle

\renewcommand{\tocname}{}

\phantom{.}\vspace{-1.99cm}

\tableofcontents

\section{Introduction}
While the occurrence and possible uses of entanglement were first studied for bipartite states, entanglement in systems containing a large number of particles is of interest both from a theoretical and from a practical point of view. Even though the macroscopic world we experience daily, can be described classically, there are a number of systems that are large enough to be described by the thermodynamic limit, which exhibit quantum behaviour, Bose-Einstein condensates, ferromagnetic and superconducting materials being prominent examples. Entanglement may turn out useful in understanding thermodynamic phenomena such as phase transitions in such systems \cite{osterloh2002scaling,osborne2002entanglement,Amico2008}. Recently there has also been a lot of attention on the role of entanglement in quantum thermodynamics \cite{brunner2014entanglement,quantacell,review2015}. Other possible applications of large entangled systems are quantum computers based on solid state or NMR systems \cite{ekert1998quantum,ladd2010quantum,PhysRevLett.86.5188,benjamin2009prospects}. In addition to studying entanglement in the limit of many particles it is also worth asking up to which temperature entanglement can exist. This is an important question for experiments, where cooling down systems requires lots of resources. While entanglement usually exists at very small temperatures, it could persist to up to 100K in superconductors \cite{vedral2004high}.

Experimentally detecting entanglement in macroscopic systems is generally a highly non-trivial task. Checking for instance the famous PPT (positivity under partial transpose) criterion, as easy as it is theoretically, requires a full state tomography, which is not possible in large systems. Also, calculating the eigenvalues for matrices of large dimensions is not practical. The method of choice are entanglement witnesses, i.e. observables with positive expectation value for all separable states but with negative expectation value for some entangled states. Witnesses reduce the complexity of entanglement detection to the measurement of a single observable. However, this observable might have no physical meaning and might be hard or impossible to measure. In particular it might be necessary to perform a collective measurement of all particles, which is not experimentally feasible in macroscopic systems. What is feasible is the measurement of macroscopic observables such as the mean energy, the magnetisation, the temperature or the entropy of the system. There have been several results showing that mean energy and temperature can serve as entanglement witnesses at low temperatures (\cite{dowling2004energy,BV04,toth2005entanglement,PhysRevA.72.032309,anders2006detecting,GH13} to name just a few).
 
However, all of these are limited intrinsically at higher temperatures when
the value of the macroscopic witness is consistent with separable pure states. Knowing that any experimental system will have some non-zero entropy, i.e. not be in pure state, often allows for reasonable lower bounds on the system entropy to be assumed (as for example through ambient temperature and the second law of thermodynamics).
Here we can leverage the entropy to bound ``by proxy''
generic entanglement criteria, allowing us to detect, in principle, 
all entangled Gibbs states of a many-body system, as well as entangled states far out of equilibrium. This method can in general be phrased as a semi-definite program (SDP), which are efficiently solvable for small system sizes (and have in fact often been used in the context of entanglement quantification \cite{ref1,ref2,ref3}). In the following we will showcase some exemplary situations where these SDPs can improve entanglement detection for systems of up to thirteen qubits on a regular laptop. Furthermore we show that using entanglement witnesses in particular allows one to harness tools from statistical physics and solve the problem through introducing a virtual ``chemical potential'' in a generalized Gibbs ensemble, changing the maximum entropy state for a given energy. In these generalized Gibbs ensembles entanglement witnesses play the role of additional conserved quantities, making the proxy method as accessible for large systems as the computation of Gibbs entropies, which we demonstrate by detecting entanglement by proxy in the thermodynamic limit. 

In what follows we will assume the Hamiltonian to be well characterized, which for large systems is of course only an approximation. Unfortunately, to characterize entanglement one requires reasonably precise knowledge of the observables used to certify it. As we will show later, the method is robust against small perturbations, but if the Hamiltonian is entirely different it can of course lead to false positives. Our method is suitably generic and requires knowledge only of conserved quantities for use in the generalized Gibbs ensembles. While we generically use the average energy to showcase our methods, they could just as well be replaced by other macroscopic approaches such as e.g. spin squeezing \cite{Squeezy1,Squeezy2,Squeezy3}.

\section{Entropy as entanglement witness}
\subsection{The primal problem}
As mentioned in the introduction, the mean energy of a system can be used to witness entanglement in the corresponding quantum state. Namely any state with mean energy less than
\be\label{eq1}
E_{\text{min,sep}}=\min_{\rho\in\sep}\tr\rho H
\ee
is entangled. By convexity the minimum is attained in a pure state. $\sep$ can be chosen to be the set of fully separable states or the set of $k$-separable states. For $k=2$, genuinely multipartite entanglement is detected. If the system is in thermal equilibrium, it is possible to derive an analogous criterion for the temperature.

The goal of this chapter is to find a condition that is able to detect entanglement at higher energies than $E_{\text{min,sep}}$. The idea is to add additional constraints to (\ref{eq1}). 
 For example, a lower bound on the (von Neumann) entropy of $\rho$:
\be\label{eq2}
E_{\text{min,sep,S'}}=\min_{\rho\in\sep,\;S(\rho)\ge S'}\tr\rho H.
\ee
If $E(\rho)<E_\text{min,sep,S'}$ and $S(\rho)\ge S'$, $\rho$ is entangled. $S'$ can be varied between $0$ and $\ln d$. Clearly, it holds $E_\text{min,sep,S'}\ge E_\text{min,sep}$. But the relevant question is whether there are entropies for which the strict inequality holds. While (\ref{eq1}) is minimised by a pure state, the minimum does not have to be unique. Hence a mixture of minimisers could also have mean energy $E_{\text{min,sep}}$, but at non-zero entropy. The exact behaviour depends, of course, on the Hamiltonian. However, it is possible to show that if there exists an $S'$, for which the strict inequality holds, it will hold for any larger entropy. This follows from
\begin{lemma}
$E_{\text{min,sep,S'}}$ as function of $S'$ is convex.
\end{lemma}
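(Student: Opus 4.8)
The plan is to exploit the three structural facts that turn $E_{\text{min,sep},S'}$ into a well-behaved optimal value: the set $\sep$ is convex, the von Neumann entropy $S$ is concave, and the objective $\rho\mapsto\tr\rho H$ is linear. Writing $f(S')=E_{\text{min,sep},S'}$ for brevity, I would first record that for every $S'\in[0,\ln d]$ the feasible set $\{\rho\in\sep:\,S(\rho)\ge S'\}$ is nonempty, since it always contains the maximally mixed state $\1/d$, whose entropy $\ln d$ dominates any admissible $S'$. This set is also compact: $\sep$ is closed and bounded, $S$ is continuous on the density operators, so the superlevel set $\{S(\rho)\ge S'\}$ is closed, and the intersection is therefore compact. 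Since the energy is continuous, the minimum in the definition of $f(S')$ is attained at some state $\rho_{S'}$.

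For the convexity itself, I would take two thresholds $S'_1,S'_2\in[0,\ln d]$ with respective minimizers $\rho_1,\rho_2$, fix $\lambda\in[0,1]$, and test the candidate state $\rho_\lambda=\lambda\rho_1+(1-\lambda)\rho_2$ against the intermediate threshold $S'_\lambda=\lambda S'_1+(1-\lambda)S'_2$. Convexity of $\sep$ gives $\rho_\lambda\in\sep$, while concavity of $S$ gives $S(\rho_\lambda)\ge\lambda S(\rho_1)+(1-\lambda)S(\rho_2)\ge\lambda S'_1+(1-\lambda)S'_2=S'_\lambda$, where the second inequality uses that each $\rho_i$ respects its own entropy constraint $S(\rho_i)\ge S'_i$. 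Hence $\rho_\lambda$ is feasible for the problem at threshold $S'_\lambda$. Linearity of the energy then yields $f(S'_\lambda)\le\tr\rho_\lambda H=\lambda\tr\rho_1 H+(1-\lambda)\tr\rho_2 H=\lambda f(S'_1)+(1-\lambda)f(S'_2)$, which is exactly the convexity statement.

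The argument is essentially a one-line consequence of these three properties, so I do not anticipate a genuine obstacle; the only point deserving care is the existence of the minimizers $\rho_1,\rho_2$, which rests on the compactness and continuity remarks above. The crucial conceptual ingredient is that the concavity of the von Neumann entropy is aligned with the direction of the constraint $S(\rho)\ge S'$: it is precisely this matching of signs that guarantees the averaged state $\rho_\lambda$ still clears the averaged threshold $S'_\lambda$. Had the constraint instead been an upper bound $S(\rho)\le S'$, the same averaging step would fail, and convexity would not follow in this manner.
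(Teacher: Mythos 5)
Your proof is correct and follows essentially the same route as the paper's: take minimizers at two thresholds, form their convex combination, verify it is feasible at the averaged threshold via convexity of $\sep$ and concavity of $S$, and conclude by linearity of $\rho\mapsto\tr\rho H$. Your additions (the compactness argument guaranteeing the minimizers exist, and the careful use of $S(\rho_i)\ge S'_i$ rather than equality) tighten details the paper glosses over, but do not change the argument.
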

\begin{proof}
Let $0\le S_{1,2}\le\ln d$ and $0\le p\le 1$. Let $\rho_{1,2}$ be separable states with entropies $S'_{1,2}$ respectively, that minimise (\ref{eq2}). Then
\be
E_{\text{min,sep,S'}}\left(pS'_1+(1-p)S'_2\right)\le\tr H\left(p\rho_1+(1-p)\rho_2\right)=pE_{\text{min,sep,S'}}(S'_1)+(1-p)E_{\text{min,sep,S'}}(S_2),
\ee
where the inequality is due to the fact that $p\rho_1+(1-p)\rho_2$ is a feasible point of (\ref{eq2}), which follows from the concavity of the entropy.
\end{proof}
Since $E_{\text{min,sep,S'}}$ is convex, it is strictly monotonically increasing as soon as it exceeds $E_{\text{min,sep}}$. Let us call the smallest entropy where the constraint hits $S_\text{min}$. See also figure \ref{fig2}.

\begin{figure}
\centering
\includegraphics[width=0.5\textwidth]{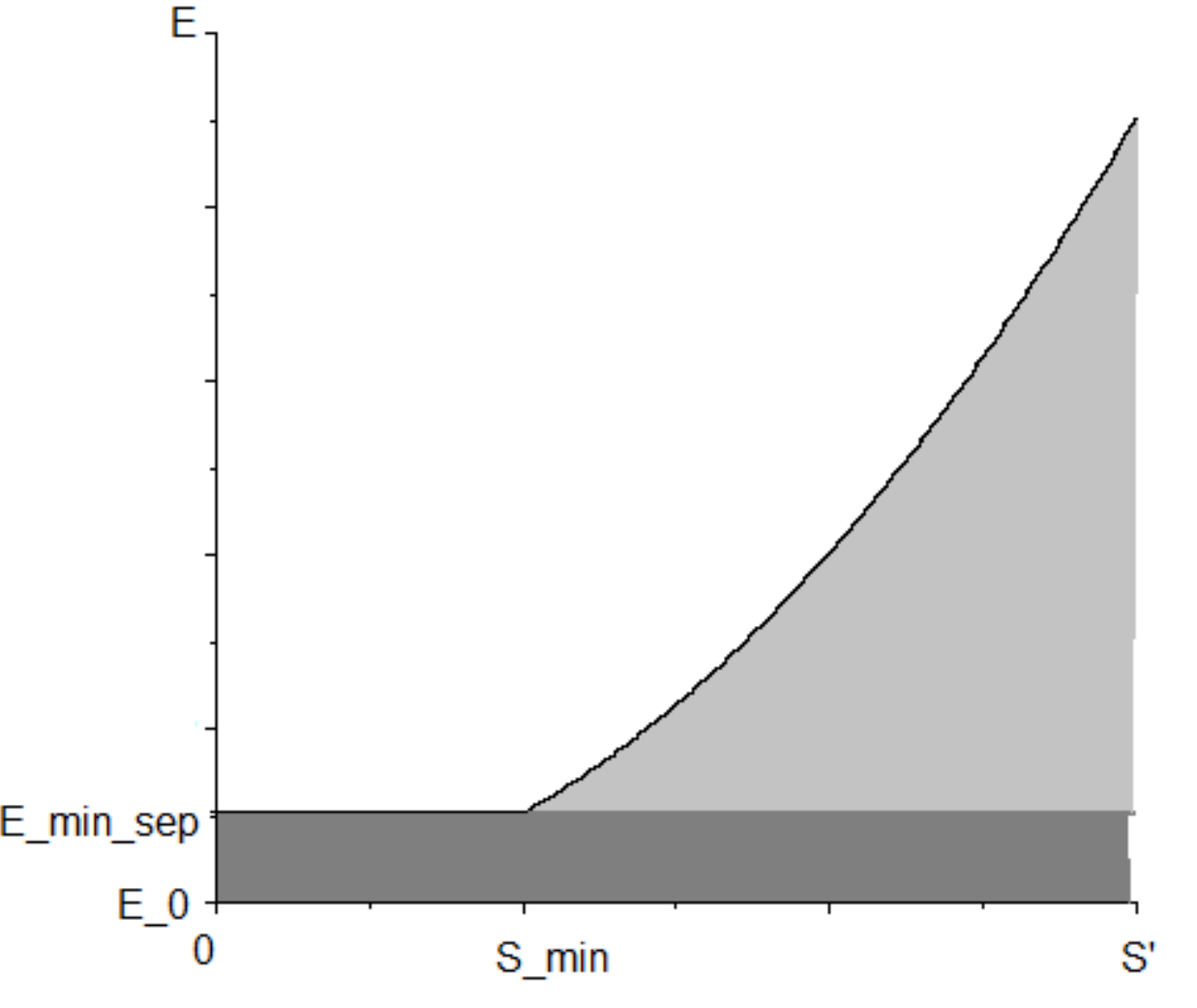}
\caption{\label{fig2} Qualitative form of $E_{\text{min,sep,S'}}$ as a function of $S'$. The dark grey area under $E_\text{min,sep}$ is detected by \cite{dowling2004energy,BV04,toth2005entanglement}. The method presented here can detect entanglement in the light grey area, i.e. for $S\ge S_\text{min}$ and $E<E_{\text{min,sep,S'}}$. In the white area above  the $E_{\text{min,sep,S'}}$ curve, separability is possible. Because of it's  convexity, the same curve can be obtained by maximising the entropy under the constraint that the mean energy is less than $E$ and varying $E$.}
\end{figure}

Since for $S'\ge S_{\text{min}}$, $E_{\text{min,sep,S'}}$ is strictly monotonically increasing, we can obtain the same curve, i.e. detect the same entanglement, by computing
\be\label{eq3}
S_\text{max,sep,E}=\max_{\rho\in\sep,\;\tr H\rho= E}S(\rho)
\ee
and varying $E_{min,sep}\le E\le\frac{\tr H}{d}$. Equivalently we could also demand $\tr H\rho\le E$.  Note that if we remove the separability constraint (\ref{eq3}) will become the Gibbs state entropy
\be\label{eq4}
S_{\text{Gibbs}}=\max_{\rho\text{ state},\;\tr H\rho= E}S(\rho),
\ee
attained by the Gibbs state. Since the optimisation is of the same form as the one yielding the regular Gibbs ensemble we will call the resulting state a \textit{separable Gibbs ensemble}. In order to compare its entropy with the Gibbs state entropy, let us define the entropy gap $
\Delta S=S_\text{Gibbs}-S_\text{max,sep,E}$.
\begin{theorem}\label{th2a}
For mean energy $E$ and corresponding temperature $T$, any state $\rho$ with $S_\text{max,sep,E}<S(\rho)\le S_\text{Gibbs}$ is entangled. In particular the Gibbs state is entangled if $\Delta S>0$.
\end{theorem}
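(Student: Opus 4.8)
The plan is to prove the contrapositive of the first claim, exploiting that $S_\text{max,sep,E}$ is by construction a \emph{constrained maximum} of the von Neumann entropy over precisely the separable states of mean energy $E$. Concretely, I would fix a state $\rho$ with $\tr H\rho=E$ and show that if $\rho$ happens to be separable then necessarily $S(\rho)\le S_\text{max,sep,E}$; the entanglement of every admissible $\rho$ violating this bound then follows immediately.

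First I would observe that any separable $\rho$ with $\tr H\rho=E$ is, by definition, a feasible point of the optimisation (\ref{eq3}). Since $S_\text{max,sep,E}$ is the maximum of $S$ over that feasible set, we get $S(\rho)\le S_\text{max,sep,E}$. Taking the contrapositive, any $\rho$ with $\tr H\rho=E$ and $S(\rho)>S_\text{max,sep,E}$ cannot be separable, hence is entangled. The upper constraint $S(\rho)\le S_\text{Gibbs}$ in the statement is not an extra hypothesis but an automatic consequence of (\ref{eq4}): every state of mean energy $E$ has entropy at most $S_\text{Gibbs}$, so the window $S_\text{max,sep,E}<S(\rho)\le S_\text{Gibbs}$ is exactly the range of admissible entropies forcing entanglement.

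For the ``in particular'' clause I would specialise to the Gibbs state $\rho_{\text{Gibbs}}$ at temperature $T$, which by (\ref{eq4}) realises $\tr H\rho_{\text{Gibbs}}=E$ and $S(\rho_{\text{Gibbs}})=S_\text{Gibbs}$. Whenever $\Delta S=S_\text{Gibbs}-S_\text{max,sep,E}>0$ we then have $S(\rho_{\text{Gibbs}})=S_\text{Gibbs}>S_\text{max,sep,E}$, so $\rho_{\text{Gibbs}}$ falls inside the detected window and is entangled by the first part.

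I do not expect a genuine obstacle, as the argument is immediate once the definitions are unwound; the only points that require care are the well-posedness prerequisites implicit in (\ref{eq3}). Specifically, I would verify that for $E_{\text{min,sep}}\le E\le\tr H/d$ the feasible set of (\ref{eq3}) is nonempty, by interpolating along the (separable) line segment joining the separable minimiser of (\ref{eq1}) to the maximally mixed state $\1/d$ and invoking the intermediate value theorem on the energy, and that the maximum is attained, using compactness of the set of separable states intersected with the closed energy hyperplane together with continuity of the entropy. These ensure $S_\text{max,sep,E}$ is a bona fide maximum rather than merely a supremum, though I note that even the supremum property alone already suffices for the entanglement conclusion above.
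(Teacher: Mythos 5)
Your proof is correct and matches the paper's intent: the paper states Theorem \ref{th2a} without an explicit proof precisely because it follows immediately from unwinding the definitions of (\ref{eq3}) and (\ref{eq4}), which is exactly your contrapositive argument. Your additional well-posedness checks (feasibility of the constraint set and attainment of the maximum) are a sound refinement beyond what the paper records, but do not change the substance of the argument.
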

Note that for any non-zero gap between $S_\text{max,sep,E}$ and $S_\text{Gibbs}$ there always exist states different from the Gibbs state which we detect, for example there is always a state with entropy $\lambda S_\text{max,sep,E}+(1-\lambda)S_\text{Gibbs}$, where $0\leq\lambda\leq1$.

The optimisation in (\ref{eq3}) is difficult to deal with because it contains the separability constraint. The main idea is to relax this constraint using sets of states which remain positive semidefinite after the application of positive (yet not completely positive) maps $\Lambda$. A prominent example would be the partial transposition. While these are of course supersets of the separable states, it is clear that for every entangled state in principle there exists a map $\Lambda$ and thus a semidefinite relaxation that will still yield optimal results for the constrained optimisation. 

Let us use the fact that every set of $\Lambda$-positive states forms a convex set that can be approximated by a suitable set of entanglement witnesses $W_i$ with suitable weights $\nu_i\ge0$ \cite{cecilia}. As every optimal entanglement witness for the set of $\Lambda$-positive states for partition $A$ can be written as $\Lambda_A^*\te\1_{\overline{A}}[|\psi\rangle\langle\psi|]$ one can find suitable entanglement witnesses for the system in question. While this of course comes at the expense of finding suitable entanglement witnesses for specific systems, it also opens the possibility to constrain the entropy beyond just states which are separable under fixed bi-partitions. I.e. it enables us to find also genuine multipartite entanglement or any other non-partially separable set, by choosing corresponding witnesses.
In order to remain fully general we include both an arbitrary set of entanglement witnesses and positive maps in the following considerations.

The relaxed problem is thus given by
\begin{align}\label{eq:constrained-entropy}
S_\text{max,sep,E}\leq S_{\text{max},\Lambda_A,W_i,E}=&\max S(\rho)\\
& \text{s.t. }\rho\text{ state},\tr\left(\rho H\right) = E,\nonumber\\
&\Lambda_A\otimes\1_{\overline{A}}[\rho]\geq 0,\tr(\rho W_i)\ge0.\nonumber  
\end{align}
$S_{\text{max},\Lambda_A,W_i,E}$ will be an upper bound on $S_\text{max,sep,E}$, it's tightness depending on the choice of witnesses or maps. Since the relaxation just provides an upper bound for the separable entropy it trivially follows that according to Theorem \ref{th2a}:
\begin{corollary}\label{th2}
For mean energy $E$ and corresponding temperature $T$, any state $\rho$ with $S_{\text{max},\Lambda_A,W_i,E}<S(\rho)\le S_\text{Gibbs}$ is entangled. In particular the Gibbs state is entangled if $S_{\text{max},\Lambda_A,W_i,E}< S_\text{Gibbs}$.
\end{corollary}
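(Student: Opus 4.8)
The plan is to derive the corollary directly from Theorem~\ref{th2a} together with the relaxation bound already asserted in~(\ref{eq:constrained-entropy}), so that the whole argument reduces to establishing the single inequality $S_\text{max,sep,E}\le S_{\text{max},\Lambda_A,W_i,E}$ and then chaining it against the hypothesis on $S(\rho)$.

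First I would verify that every feasible point of the separable optimisation~(\ref{eq3}) is also a feasible point of the relaxed optimisation~(\ref{eq:constrained-entropy}). Take any separable $\rho=\sum_k p_k\,\sigma_k^A\te\tau_k^{\overline{A}}$ with $\tr(\rho H)=E$. For the positive-map constraint one computes $\Lambda_A\te\1_{\overline{A}}[\rho]=\sum_k p_k\,\Lambda_A(\sigma_k^A)\te\tau_k^{\overline{A}}\ge0$, since positivity of $\Lambda_A$ gives $\Lambda_A(\sigma_k^A)\ge0$, each $\tau_k^{\overline{A}}\ge0$, and $p_k\ge0$. For the witness constraints, the defining property of an entanglement witness is precisely that $\tr(\rho W_i)\ge0$ for all separable $\rho$. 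Hence $\rho$ meets every constraint of~(\ref{eq:constrained-entropy}), so the separable feasible set is contained in the relaxed feasible set, and maximising $S$ over the larger set can only raise the optimum: $S_\text{max,sep,E}\le S_{\text{max},\Lambda_A,W_i,E}$.

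With this inequality the corollary is immediate. Suppose $\rho$ has mean energy $E$ and $S_{\text{max},\Lambda_A,W_i,E}<S(\rho)\le S_\text{Gibbs}$. Then $S_\text{max,sep,E}\le S_{\text{max},\Lambda_A,W_i,E}<S(\rho)$, so $\rho$ already satisfies the hypothesis $S_\text{max,sep,E}<S(\rho)\le S_\text{Gibbs}$ of Theorem~\ref{th2a}, and therefore $\rho$ is entangled. For the ``in particular'' statement, if $S_{\text{max},\Lambda_A,W_i,E}<S_\text{Gibbs}$ then the Gibbs state at energy $E$, which by~(\ref{eq4}) attains entropy $S_\text{Gibbs}$, meets this condition and is thus entangled.

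I do not expect a genuine obstacle, since the corollary is essentially a relabelling of Theorem~\ref{th2a} through a weaker upper bound. The only point that demands care is the first step, namely confirming that the $\Lambda$-positivity and witness conditions are genuine \emph{relaxations}, i.e. that they are satisfied by all separable states; this is exactly what guarantees that the bound points in the direction $S_\text{max,sep,E}\le S_{\text{max},\Lambda_A,W_i,E}$ rather than the reverse. The relaxation is of course only useful when the chosen maps and witnesses are tight enough to leave a non-vacuous gap $S_{\text{max},\Lambda_A,W_i,E}<S_\text{Gibbs}$, but establishing the corollary itself requires nothing beyond the containment of feasible sets.
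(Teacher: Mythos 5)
Your proposal is correct and follows essentially the same route as the paper: the paper states the inequality $S_\text{max,sep,E}\le S_{\text{max},\Lambda_A,W_i,E}$ as part of~(\ref{eq:constrained-entropy}) (justified by exactly the feasible-set containment you spell out, since $\Lambda$-positive states and witness-positive states are supersets of the separable states) and then declares that the corollary ``trivially follows'' from Theorem~\ref{th2a}. You have merely made explicit the verification that the paper leaves implicit, which is a fine and complete way to present it.
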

While the relaxation weakens the detection criterion, we are now dealing with a convex optimisation problem with only linear and semidefinite constraints (since maximising a concave function is of course equivalent to minimising a convex one). Such programs can be solved efficiently numerically thanks to so-called \textit{interior point methods}\cite{B04}. In addition, they have a duality theory which can be used to give certified upper bounds on $S_{\text{max},\Lambda_A,W_i,E}$, as we will discuss in the next section.
\begin{figure}
\centering
\includegraphics[width=0.7\textwidth]{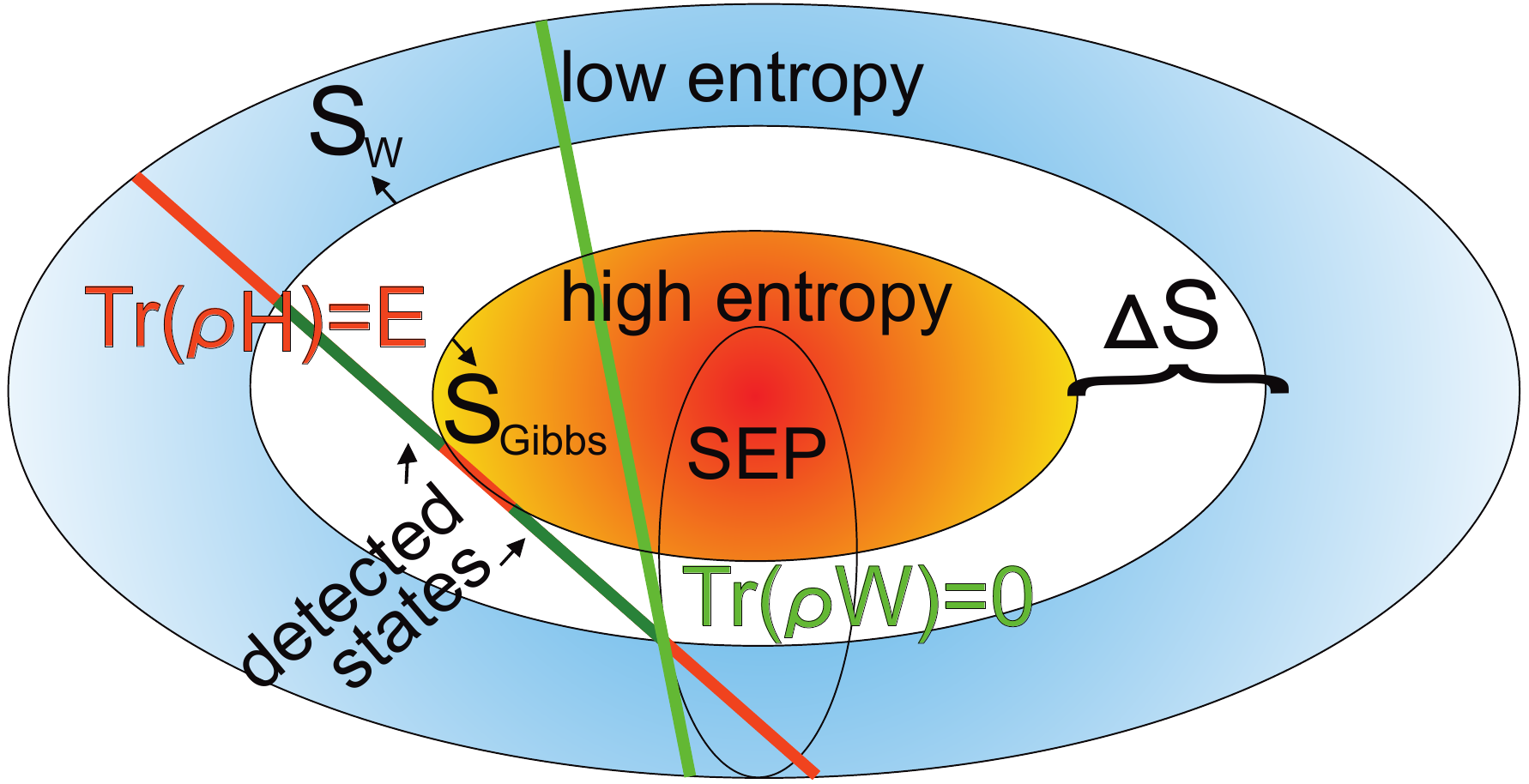}
\caption{\label{fig:entropy}Schematic view of Corollary \ref{th2} for one witness $W$. The constrained entropy $S_{\text{max},W,E}$ or short $S_{W}$ is attained at the intersection between the (green) hyperplane induced by the witness $\text{Tr}(\rho W)=0$ and the (red) hyperplane induced by the mean energy $\text{Tr}(\rho H)=E$. The concentric circles represent equi-entropic states, with the outside low entropy region (blue) bounded by $S_{W}$ and the center high entropy region (red) bounded by the Gibbs state entropy $S_{Gibbs}$. The detected states lie within the (white) region spanned by $\Delta S$.}
\end{figure}
One might ask why instead of applying Corollary \ref{th2}, one cannot simply apply the witnesses or positive maps involved directly. While this is possible theoretically, let us note again that witnesses or positive maps are in general not easily accessible in experiments, while macroscopic variables such as the mean energy and entropy are. Since Corollary \ref{th2} makes use of witnesses without the need to measuring them directly, but instead requires the measurement of mean energy and entropy, we call those two quantities \textit{Proxy Witnesses}.

Let us conclude this section by noting that apart from the von Neumann entropy, in principle any concave function can serve as a proxy witness. An example is the so called \textit{linear entropy}, also known as \textit{impurity} of a quantum state, which is defined by $S_L(\rho)=1-\tr\rho^2$ \cite{PhysRevA.70.052309}. The linear entropy is upper bounded by the von Neumann entropy and can be seen as a measure of mixedness of a quantum state, as well. The optimisation of the linear entropy is easier to deal with than of the von Neumann entropy, as it is only quadratic in $\rho$.

\subsection{The dual problem.}
For convex optimisation problems, such as \ref{eq:constrained-entropy}, it is possible to derive a \textit{dual} optimisation problem. 
To do so, one defines the \textit{Lagrangian}, a function incorporating both objective, i.e. the function to be maximised, and the constraints. The constraints are added by means of Lagrange multipliers. The Lagrange multipliers are referred to as \textit{dual variables}, whereas the variables of the original problem are referred to as \textit{primal}. Maximisation of the Lagrangian over all admissible primal points yields the \textit{dual objective}. The dual problem is then given by the minimisation of the dual objective with respect to the dual variables. It can be shown that any dual feasible point provides an upper bound on the original problem, which is referred to as \textit{weak duality}. For details, please refer to \cite{B04}. For \ref{eq:constrained-entropy},the Lagrangian is given by
\begin{equation}\begin{split}
  L(\rho,\lambda,\mu,\nu_i,X_0,X_{A,\Lambda})
      = S(\rho) &+ \lambda(\tr\rho-1) + \mu(\tr\rho H - E) + \sum_i \nu_i\tr\rho W_i\\
                &+ \tr\rho X_0 + \sum_{A,\Lambda} \tr\Lambda_A\otimes\1_{\overline{A}}[\rho]X_{A,\Lambda},
  \label{eq:Lagrangian}
\end{split}\end{equation}
where $\lambda,\mu$ and the $\nu_i$ are real Lagrange dual variables, corresponding to the trace, mean energy and witness constraints, respectively. As the witness constraint is given by an inequality, we can restrict to $\nu_i\ge0$. $X_0$ and the $X_{A,\Lambda}$, the Lagrange dual variables corresponding to the positivity and positive maps constraints respectively, are positive semidefinite matrices.

Here it becomes clear that witnesses are easier to deal with numerically than positive maps, as they only require scalar variables. The \emph{dual objective function} reads as follows
\begin{equation}
  \ell(\lambda,\mu,\nu_i,X_0,X_{A,\Lambda}) := \max_\rho L(\rho,\lambda,\mu,\nu_i,X_0,X_{A,\Lambda}),
  \label{eq:dual-fct}
\end{equation}
where the maximisation is over all admissible $\rho$, i.e. those for which the right hand side is defined, in particular not necessarily obeying the primal constraints in (\ref{eq:constrained-entropy}). Note that the dual problem is also a convex optimisation problem: the
constraints are indeed linear and semidefinite, while $\ell$ is convex, owed to the linearity of $L$ in the dual variables.

Clearly, for every primal feasible $\rho$ and dual feasible $\lambda,\mu,\nu_i,X_0$ and $X_{A,\Lambda}$ it holds
\be
  S(\rho) \leq L(\rho,\lambda,\mu,\nu_i,X_0,X_{A,\Lambda}) \leq \ell(\lambda,\mu,\nu_i,X_0,X_{A,\Lambda})
\ee
In particular we have what is referred to as \emph{weak duality}:
\begin{equation}
  S_{\text{max},\Lambda_A,W_i,E} \leq \min \ell(\lambda,\mu,\nu_i,X_0,X_{A,\Lambda}),
  \label{eq:weak-duality}
\end{equation}
where the minimisation is over the dual feasible region, i.e. $\lambda,\mu\in\RR,\nu_i\ge0,X_0\ge0,X_{A,\Lambda}\ge0$. The right hand side is referred to as the \textit{Lagrange dual problem}. Note that any dual feasible bound yields an upper bound on $S_{\text{max},\Lambda_A,W_i,E}$. This allows us to obtain analytical upper bounds from numerics. One simply has to numerically optimise the dual problem and check if the so obtained optimisers are dual feasible. If they are they can be inserted into $\ell$ yielding the bound.

Let us now evaluate the dual objective function (\ref{eq:dual-fct}) for the von Neumann entropy, which is well-defined on any positive semidefinite matrix. Since the entropy is concave, so is the Lagrangian. Hence it is sufficient to find a critical point, that is positive semidefinite, of $L$ as function of $\rho$. Using $\nabla_\rho S(\rho) = -\1 - \ln\rho$ and $\nabla_\rho \tr\rho M = M$ with respect to the trace inner product on Hermitian matrices, as well as the fact that $\tr\left( \Lambda_A\otimes\1_{\overline{A}}[\rho]X_{A,\Lambda}\right)=\tr\left( \rho\Lambda^*_A\otimes\1_{\overline{A}}[X_{A,\Lambda}]\right)$ one can obtain
\be
  \nabla_\rho L=-\1-\ln\rho + \lambda\1 + \mu H + \sum_i \nu_i W_i+ X_0 + \sum_{A,\Lambda} \Lambda^*_A\otimes\1_{\overline{A}}[X_{A,\Lambda}],
\ee
which vanishes for
\be
  \rho^\text{crit} = \exp\left( (\lambda-1)\1 + \mu H + \sum_i \nu_i W_i + X_0 + \sum_{A,\Lambda} \Lambda^*_A\otimes\1_{\overline{A}}[X_{A,\Lambda}] \right).
\ee
Hence the dual objective is given by
\begin{align}
\ell(\lambda,\mu,\nu_i,X_0,X_{A,\Lambda})&=L(\rho^\text{crit},\lambda,\mu,\nu_i,X_0,X_{A,\Lambda})\\
&=\tr\exp\left((\lambda-1)\1 + \mu H + \sum_i \nu_i W_i + X_0 + \sum_{A,\Lambda} \Lambda^*_A\otimes\1_{\overline{A}}[X_{A,\Lambda}] \right)-\lambda-\mu E\\
&=e^{(\lambda-1)}\tr\exp\left(\mu H + \sum_i \nu_i W_i + X_0 + \sum_{A,\Lambda} \Lambda^*_A\otimes\1_{\overline{A}}[X_{A,\Lambda}] \right)-\lambda-\mu E.\label{eq:vN-ent-dual-fct}
\end{align}
We will have to minimise $\ell$ with respect the dual variables. Since $X_0\ge0$, the minimum will be attained at $X_0=0$. $\ell$ can be easily minimised for $\lambda$. This is due to the fact that the eigenvalues of a hermitian matrix cannot decrease if a positive semi-definite matrix is added, which follows from Theorem 4.3.1 of \cite{HJ90}. By convexity the minimum is attained where the derivatives vanishes, yielding a function $\tilde{\ell}$ just of $\mu,\nu_i,X_{A,\Lambda}$.
\begin{equation}\label{ell}
	\tilde{\ell}(\mu,\nu_i,X_{A,\Lambda})=\ln\tr\exp\left(\mu H + \sum_i \nu_i W_i + \sum_{A,\Lambda} \Lambda^*_A\otimes\1_{\overline{A}}[X_{A,\Lambda}]\right)-\m E.
\end{equation}
By weak duality, it holds
\begin{equation}
  S_{\text{max},\Lambda_A,W_i,E} \leq \min_{\mu\in\RR,\nu_i\ge0,X_{A,\Lambda}\ge0} \tilde{\ell}(\mu,\nu_i,X_{A,\Lambda}).
  \label{eq:ltildeBound}
\end{equation}
As $\ell$ is convex, $\tilde{\ell}$ is convex, as well \cite{B04}. If only witnesses are used, $\tilde{\ell}$ only has scalar variables, which results in a greatly enhanced numerical performance compared to the primal problem.

It is an interesting observation that $\tilde{\ell}$ is reminiscent of a grand canonical ensemble and
\be
\label{sepgibbs}
\tilde{\ell}=\ln Z'+\beta E=\ln\tr\exp(-\beta H')+\beta E\,,
\ee
where $H'=H-\frac{1}{\beta}\left(\sum_i \nu_i W_i + \sum_{A,\Lambda} \Lambda^*_A\otimes\1_{\overline{A}}[X_{A,\Lambda}]\right)$. Instead of particle numbers the constraints stem from the specific witnesses or maps used. We will sometimes refer to eq(\ref{sepgibbs}) as a \emph{witness canonical ensemble entropy} (WCEE).

In other words, when choosing $\mu=-\beta$, the operator $\sum_i \nu_i W_i + \sum_{A,\Lambda} \Lambda^*_A\otimes\1_{\overline{A}}[X_{A,\Lambda}]$ can be seen as some sort of grand canonical ensemble with additional "chemical" or rather \emph{witness potentials}. In particular the entropy gap can be lower bounded as follows:
\be\label{eq:EntropyGapBound}
\Delta S\ge\ln\frac{\tr\exp\left(-\beta H\right)}{\tr\exp\left(-\beta H+\sum_i \nu_i W_i + \sum_{A,\Lambda} \Lambda^*_A\otimes\1_{\overline{A}}[X_{A,\Lambda}]\right)}
\ee
for all $\nu_i\ge0$ and $X_{A,\Lambda}\ge0$.

It is also possible to compute $\ell$ for the linear entropy. After rewriting the Lagrangian as
\be
  L_\text{LIN} = 1 - \lambda - \mu E 
      + \tr\rho\left( \lambda\1 + \mu H + \nu_i W_i + X_0 +  \sum_{A,\Lambda} \Lambda^*_A\otimes\1_{\overline{A}}[X_{A,\Lambda}] \right) 
      - \tr\rho^2,
\ee
it is easy to show that the optimiser is given by
\be
  \rho^\text{crit} = \frac{1}{2}\left( \lambda\1 + \mu H + \nu_i W_i + X_0 +  \sum_{A,\Lambda} \Lambda^*_A\otimes\1_{\overline{A}}[X_{A,\Lambda}] \right).
\ee
Hence
\begin{equation}
  \ell_\text{LIN}(\lambda,\mu,\nu_i,X_0,X_{A,\Lambda})
         = \frac{1}{4}\tr\left( \lambda\1 + \mu H + \nu_i W_i + X_0 +  \sum_{A,\Lambda} \Lambda^*_A\otimes\1_{\overline{A}}[X_{A,\Lambda}]\right)^2
           + 1 - \lambda - \mu E.
  \label{eq:lin-ent-dual-fct}
\end{equation}
Minimisation of $\ell_\text{LIN}$ yields an upper bound on the maximal linear entropy achievable by separable states. Again, this minimisation is easier to deal with as in the von Neumann case, because it is only a quadratic function.

\subsection{On numerics}
Let us now briefly discuss how the primal and dual optimisation problems introduced in the preceding sections can be implemented numerically. As mentioned before, (\ref{eq:constrained-entropy}) and (\ref{eq:ltildeBound}) have concave and convex objectives, respectively, as well as linear and semidefinite constraints. Note that for our purposes it is sufficient to only compute the dual problems. It can however be instructive to also compute the primal problem in order to obtain the optimiser and check if strong duality holds. While interior-point methods can in principle solve such problems \cite{B04}, readily available solvers such as  Sedumi \cite{sedumi} or SDPT3 \cite{sdpt3} can only handle linear and quadratic objectives. This is sufficient to solve (\ref{eq:constrained-entropy}) and minimise (\ref{eq:lin-ent-dual-fct}) for the Linear entropy.

For the von Neumann entropy, however, there is a way to obtain an approximate solution: It is a well known fact that the von Neumann entropy of a state is equal to the Shannon entropy of its eigenvalues. As for $\tilde\ell$, note that for a hermitian $n\times n$ matrix $M$ with eigenvalues $\lambda_i$ it holds $\tr\exp(M)=\sum_i\exp(\lambda_i)$. Hence $\tilde\ell$ is a function of the eigenvalues of the exponent. Both $S$ and $\tilde\ell$ are invariant under permutation of the eigenvalues. This allows us to reformulate (\ref{eq:constrained-entropy}) as 
\begin{align}\label{eq:Eig_S}
S_{\text{max},\Lambda_A,W_i,E}=&\max_{\textbf{v},\rho} H(\textbf{v}),\\
& \text{s.t. }\textbf{v}=\text{eig}(\rho),\;\rho\text{ state},\;\tr\left(\rho H\right) = E,\nonumber\\
&\Lambda_A\otimes\1_{\overline{A}}[\rho]\geq 0,\tr(\rho W_i)\ge0\nonumber
\end{align}
where $\text{eig}(\rho)$ denotes the vector of eigenvalues of $\rho$. Note that this is not a semidefinite constraint. There is, however, a trick to include the eigenvalues into a semidefinite programme\cite{Loefberg-private}: While the eigenvalues are generally not SDP-representable, the sum of the $k$ largest eigenvalues of a matrix is \cite{alizadeh1995interior}. We can now replace the eigenvalue constraint in (\ref{eq:Eig_S}) by the constraint that $\textbf{v}$ has to majorise $\text{eig}(\rho)$, i.e.
\begin{align*}
&\tr(\rho)=\sum_iv_i\\
&s_1(\rho)\le v_1\\
&s_2(\rho)\le v_1+v_2\\
&...\\
&s_{n-1}(\rho)\le\sum_{i=1}^{n-1} v_i,
\end{align*}
where $s_k(M)$ denotes the sum of the k largest eigenvalues of a matrix $M$. Then the optimising $\textbf{v}$ will be equal to the eigenvalues of the optimising $\rho$. To see this, recall that if $\textbf{v}\succ\textbf{w}$, it holds $H(\textbf{v})\le H(\textbf{w})$ \cite{nielsen2001majorization}. Let us now assume that the (unique) optimising $\textbf{v}$ of the Shannon entropy majorises but is not equal to the eigenvalues of the optimal $\rho$. Then $S(\rho)$ would be greater or equal to $H(\textbf{v})$, which is a contradiction.
Since $\textbf{v}\succ\textbf{w}$ implies $f(\textbf{v})\ge f(\textbf{w})$ for any convex function\cite{nielsen2001majorization}, the same argument can be applied to (\ref{eq:ltildeBound}).

So far, we have replaced the von Neumann by the Shannon entropy and, in the dual problem, the matrix exponential by scalar exponential functions and transformed the arising eigenvalue constraints into semidefinite constraints. In order to apply Sedumi or SDPT3, all that is left to do is to approximate the objectives by piecewise linear functions. While this will only give us approximate solutions, let us note that the optimisers found in this way, can be easily checked to be feasible and inserted in the original objective.

When using PPT or other positive maps the dimension of the matrix variables increases exponentially with the number of qubits. As we will show in the next section, this makes it difficult to go beyond five qubits. The computation of (\ref{ell}) greatly simplifies when only witnesses are used. In this case there will only be scalar variables $\m$ and $\n_i$ and no semidefinite constraints. This allows for application of a standard non-linear solver, such as FMINCON \cite{mathworksfmincon}. Using FMINCON we were able to obtain results for up to 13 qubits, as we will present in the next section.

If we want to show if the Gibbs state is entangled, we will also need to compute the Gibbs state entropy $S(\rho_{Gibbs})=\beta E+\ln Z$. To do so numerically it is sufficient to compute the eigenvalues of the Hamiltonian, which allows for computation of the partition function $Z=\tr e^{-\beta H}$ and the mean energy $E=\tr\rho_{Gibbs} H=\frac{1}{Z}\sum_i e^{-\beta E_i}E_i$.

\subsection{Examples}
In order to test our method, we have implemented it for \textit{Heisenberg model}, which was introduced in order to simplify the analysis of systems of spins, such as ferro- or antiferromagnets. It only takes into account the nearest neighbour exchange interaction between the spins as well an external magnetic field. In the one dimensional case, i.e. a chain of spins, the Heisenberg model is described by the following Hamiltonian
\begin{equation}
H=-\sum_{i=1}^N\left(J_x\s^x_i\s^x_{i+1}+J_y\s^y_i\s^y_{i+1}+J_z\s^z_i\s^z_{i+1}\right)+B\sum_{i=1}^N\s^z_i
\end{equation} 
where $J_x,J_y,J_z$ are the coupling constants for the $x,y,z$-components of the spins, $N$ the number of spins and $B$ the external magnetic field in $z$direction. $\s^x_i,\s^y_i,\s^z_i$ denote the Pauli operators for the $i$-th spin. Let us assume periodic boundary conditions, i.e. a ring of spins. If $J_x=J_y=J_z=:J$, we are talking about an \textit{isotropic XXX Heisenberg model}. $J>0$ and $J<0$ correspond to ferromagnetic and antiferromagnetic systems, respectively.  If $J_x=J_y\ge J_z$, the system is called an \textit{XXZ} system and so on. If only one component of the spin is considered, i.e. only one $J_i\neq 0$, the Heisenberg model reduces to the \textit{Ising model}.

The numerical results presented below have been obtained using either Sedumi or SDPT3 as well as Yalmip \cite{YALMIP}. Using PPT constraints we have applied the method for up to five qubits, for witnesses for up to 13 qubits.

\subsubsection{Antiferromagnetic Heisenberg model}
For the one dimensional antiferromagnetic Heisenberg model (XXX with $J=-1$), it is possible to detect entanglement at higher energies than \cite{dowling2004energy,BV04,toth2005entanglement}, both using the von Neumann and the linear entropy as proxies. The computations have been performed using both the partial transpose with respect to the partition $A_{i\text{ even}}|A_{j\text{ odd}}$ and all possible partitions.

Let us start with the results for the even versus odd partition. In tables \ref{table1} the energy ranges where entanglement is detected are shown for the von Neumann entropy and for the linear entropy, respectively. Here $N$ denotes the number of qubits, $E_0$ the ground state energy and $E_\text{min,PPT-even-odd}$ denotes the smallest mean energy allowing for PPT w.r.t the even versus odd partition. Below that mean energy any state is guaranteed to be entangled. Note that $E_\text{min,sep}\ge E_\text{min,PPT-all}\ge E_\text{min,PPT-even-odd}$. Between $E_\text{min,PPT-even-odd}$ and $E_\text{max,gap}$ all states falling into the entropy gap are entangled. I.e. $E_\text{max,gap}$ is the largest energy for which the proxy method can work. This includes the Gibbs state in the von Neumann case. Note that the Gibbs state is not necessarily the maximiser for the linear entropy. Hence, instead of the Gibbs state we have computed the state with maximal linear entropy 
\be
S_{\text{L,max}}=\max_{\rho\text{ state},\;\tr H\rho= E}S_L(\rho)
\ee
and compared it to the constraint linear entropy. $\cal E$ is defined as the fraction of the energy range where entanglement is detected
\be
\mathcal{E}=\frac{E_\text{max,gap}-E_0}{E_\text{max}-E_0},
\ee
where $E_\text{max}$ denotes the largest energy eigenvalue.

		\begin{table}
		\begin{center}
	
	\begin{tabular}{|l|l|l|l|l|}\hline
$N$	&$E_0/N$ & $E_\text{min,PPT-even-odd}/N$&$E_\text{max,gap}/N$&$\cal E$\\\hline
3&-1.000&-1.000&-0.610&0.195\\\hline
4&-2.000&-1.000&-0.660&0.447\\\hline
5&-1.494&-1.008&-0.695&0.320\\\hline
\end{tabular}
\caption{\label{table1}Energy ranges in which the von Neumann entropy can be successfully used a proxy for detecting entanglement in the antiferromagnetic Heisenberg model ($J=-1$).}
	\begin{tabular}{|l|l|l|l|l|}\hline
$N$	&$E_0/N$ & $E_\text{min,PPT-even-odd}/N$&$E_\text{max,gap}/N$&$\cal E$\\\hline
3&-1.000&-1.000&-0.600&0.200\\\hline
4&-2.000&-1.000&-0.370&0.543\\\hline
5&-1.494&-1.008&-0.302&0.478\\\hline
\end{tabular}
\caption{Energy ranges in which the linear entropy can be successfully used a proxy for detecting entanglement in the antiferromagnetic Heisenberg model ($J=-1$).}
\end{center}
\end{table}

As can be seen it tables \ref{table1}, entanglement can be detected in a large area of the energy spectrum. The maximum mean energy where entanglement can be detected is also substantially higher than $-N$, which is the maximum mean energy where the methods of \cite{dowling2004energy,BV04,toth2005entanglement} work. Let us also note that the ground state energies given in \cite{dowling2004energy,BV04,toth2005entanglement} are only correct in the limit of large $N$, as has been noted in \cite{toth2005entanglement}. The fact that the linear entropy detects more than the von Neumann entropy could be a result of the piecewise linear approximation of the von Neumann entropy in the maximisation, which results in weaker bounds on $S_\text{max,sep}$.

The entropy gap can be seen in figure \ref{plot:PPT5}a and \ref{plot:PPT5}b for the von Neumann entropy and for the linear entropy, respectively. Here the Gibbs state entropy or $S_{\text{L,max}}$, as well as $S_\text{max,PPT-even-dd}$ are plotted versus the mean energy. 
\begin{figure}
	\centering
	(a)\includegraphics[width=0.4\textwidth]{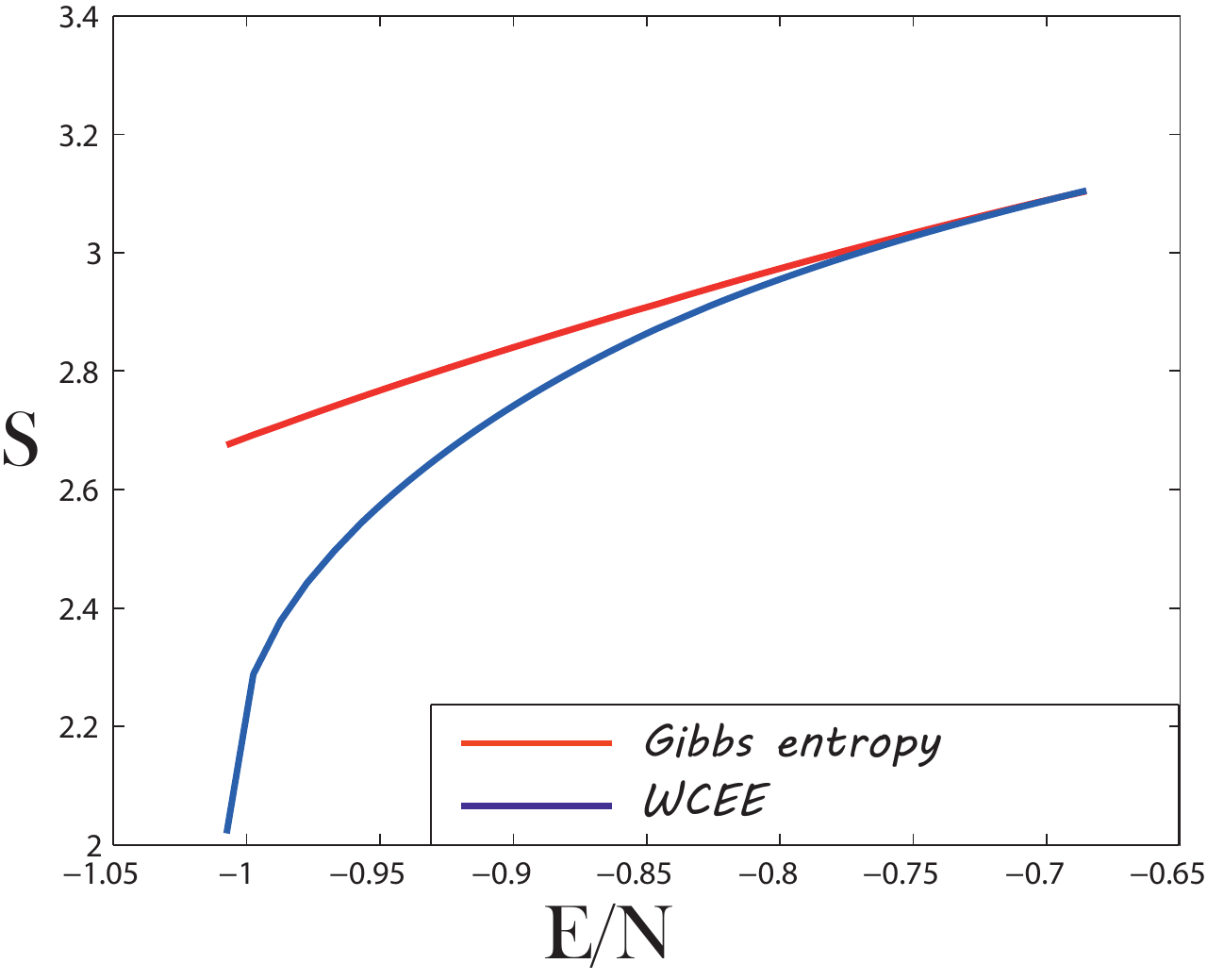}
	(b)\includegraphics[width=0.4\textwidth]{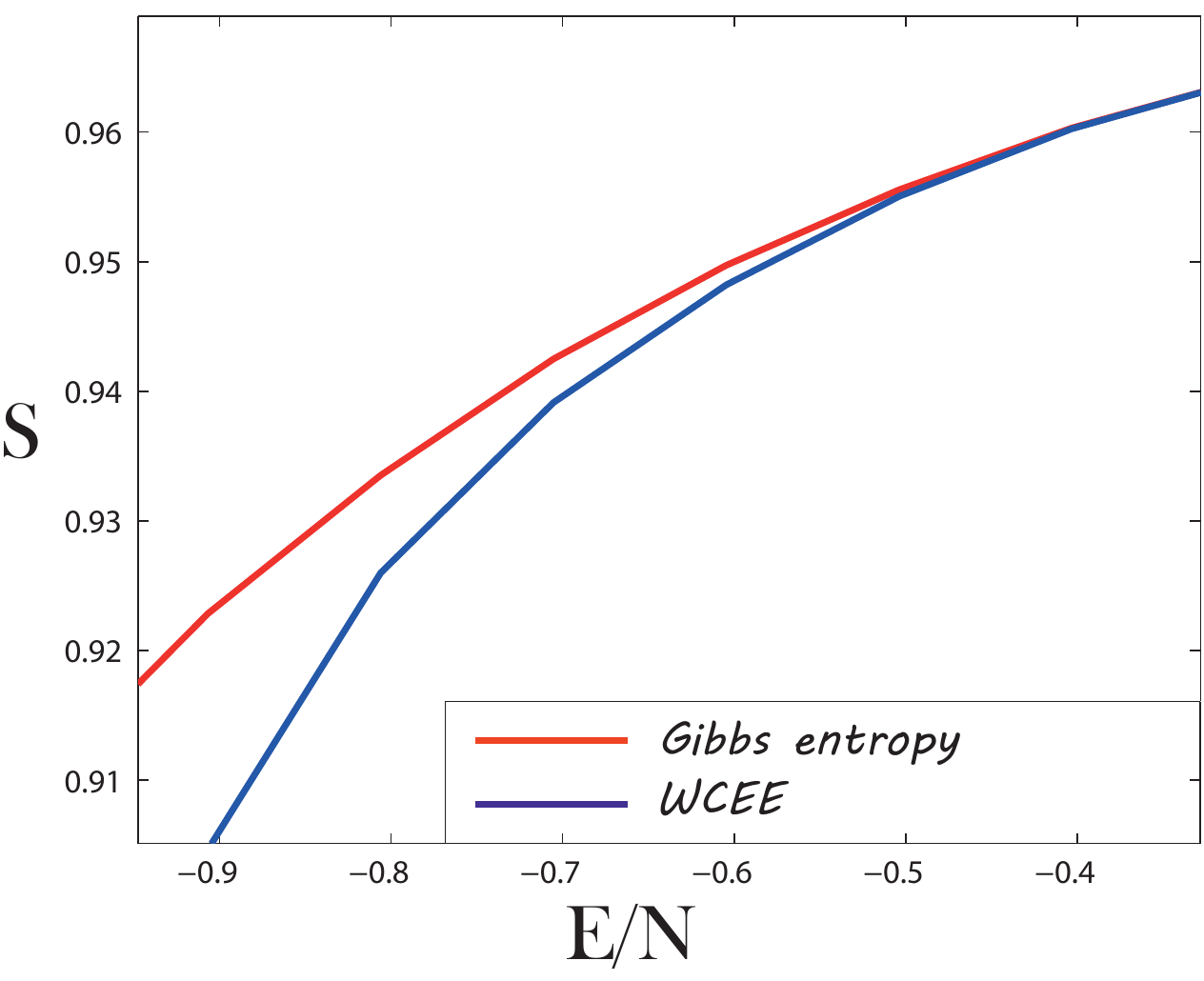}
	\caption{\label{plot:PPT5}The entropy gap for the antiferromagnetic Heisenberg model ($J=-1$) with $5$ qubits and the PPT condition w.r.t. subsystems $1,3,5$. (a) shows the von Neumann entropy, (b) the linear entropy.
	The Gibbs state entropy (red line) and the dual of the constraint, i.e. the \emph{witness canonical ensemble entropy} (WCEE), (blue line) plotted versus the mean energy. The difference of the two lines constitutes the \emph{entropy gap}.}
\end{figure}
Entanglement can also be detected in the presence of a magnetic field. Tables \ref{table3} and \ref{table4} show the detected energy ranges in a three qubit system for the von Neumann and linear entropies, respectively. The entropies are plotted in figure \ref{plot:B} for $B=3$.

	\begin{table}
		\begin{center}
	\begin{tabular}{|l|l|l|l|l|}\hline
$B$	&$E_0/N$ & $E_\text{min,PPT-even-odd}/N$&$E_\text{max,gap}/N$&$\cal E$\\\hline
0&-1.000&-1.000&-0.610&0.195\\\hline
1&-1.333&-1.333&-0.720&0.184\\\hline
2&-1.667&-1.667&-1.033&0.136\\\hline
3&-2.000&-2.000&-1.540&0.077\\\hline
\end{tabular}
\caption{\label{table3}Energy ranges for three qubits in the antiferromagnetic Heisenberg model ($J=-1$) in a magnetic field $B=3$ using von Neumann entropy}
\end{center}
\end{table}

		\begin{table}
		\begin{center}
	\begin{tabular}{|l|l|l|l|l|}\hline
$B$	&$E_0/N$ & $E_\text{min,PPT-even-odd}/N$&$E_\text{max,gap}/N$&$\cal E$\\\hline
0&-1.000&-1.000&-0.600&0.200\\\hline
1&-1.333&-1.333&-0.560&0.232\\\hline
2&-1.667&-1.667&-0.717&0.204\\\hline
3&-2.000&-2.000&-0.940&0.177\\\hline
\end{tabular}
\caption{\label{table4}Energy ranges for three qubits in the antiferromagnetic Heisenberg model ($J=-1$) with a magnetic field $B=3$ using linear entropy}
\end{center}
\end{table}

\begin{figure}
	\centering
	(a)\includegraphics[width=0.4\textwidth]{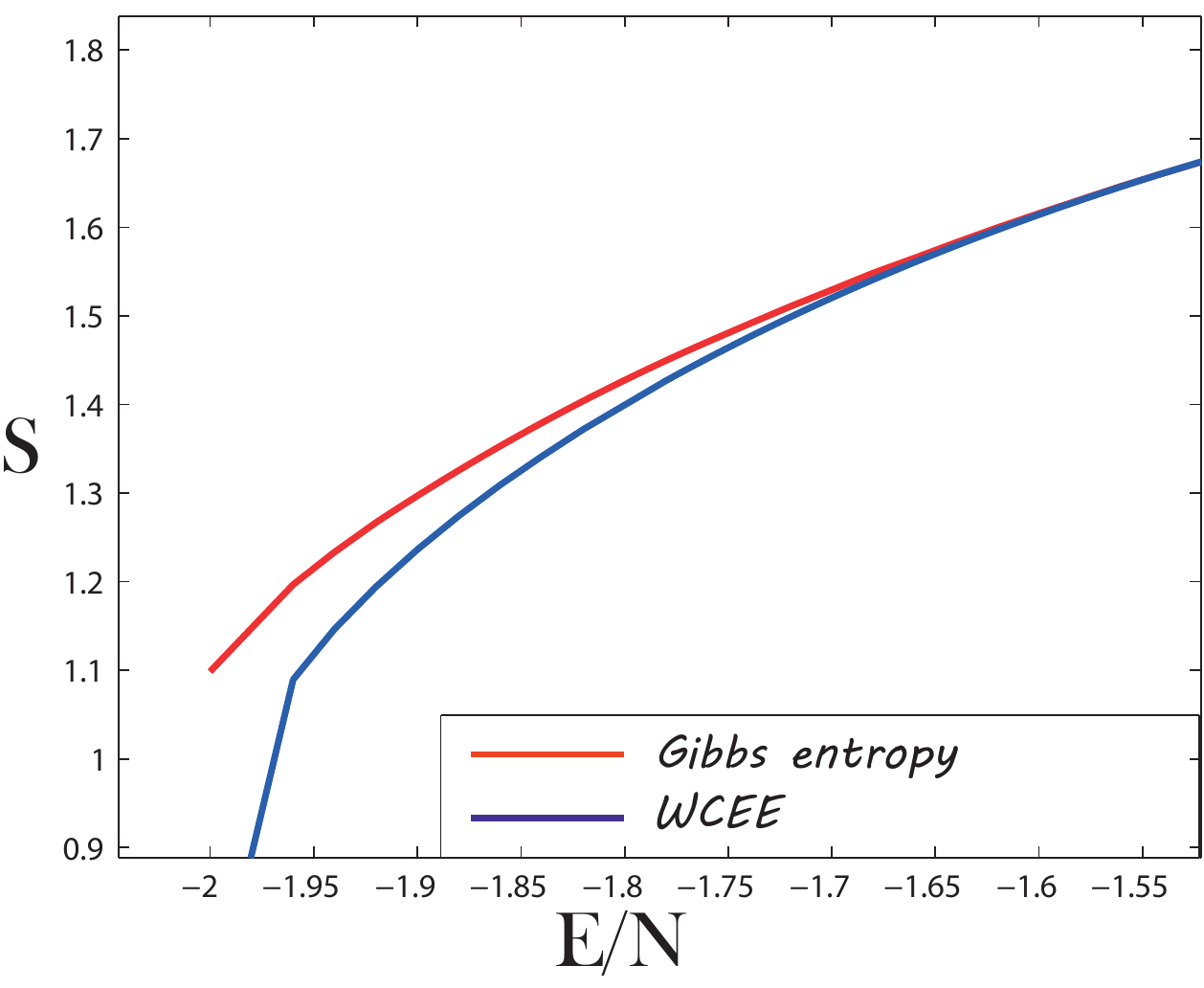}
	(b)\includegraphics[width=0.4\textwidth]{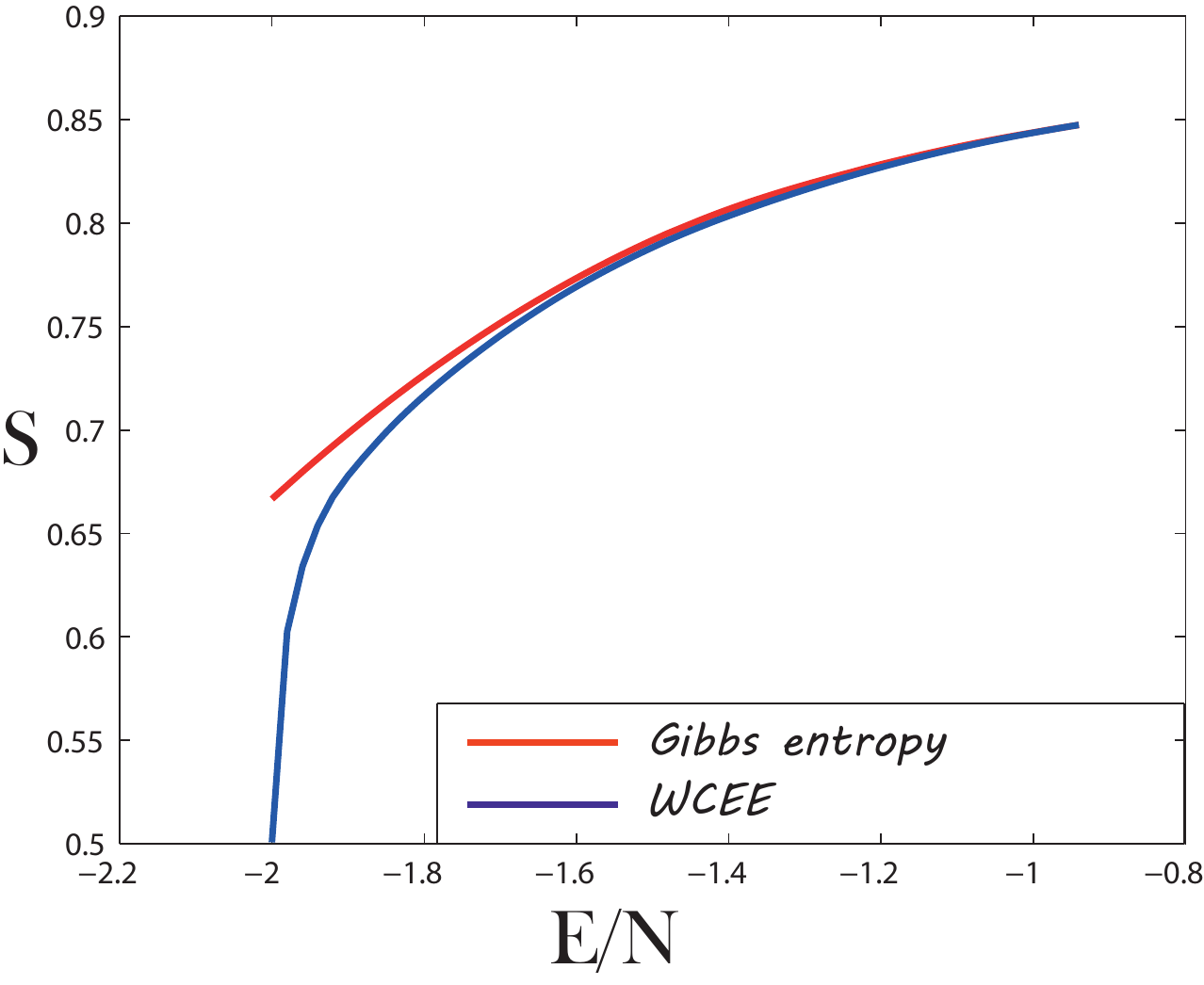}
	\caption{\label{plot:B}The entropy gap for 3 qubits in the antiferromagnetic Heisenberg model ($J=-1$) with a magnetic field $B=3$, again for PPT w.r.t. the even versus uneven partition. (a) shows the von Neumann entropy, (b) the linear entropy.}
\end{figure}
Going from PPT w.r.t. the even-uneven partition to all possible partitions greatly increases the computation time as the number of partitions grows exponentially with the number of qubit. Since the constrained entropy increases monotonically in $E$, it holds $E_\text{max,gap,PPT-all}\ge E_\text{max,gap,PPT-even-odd}$, possibly increasing the energy range, where entanglement can be detected. In the example considered here, however, only a small increase can be obtained. See tables \ref{table5} and \ref{table6}, as well as figure \ref{plot:all}. For three qubits it holds $E_\text{min,PPT-all}/N=-0.6>-1$, implying that $E_\text{min,sep}/N\ge-0.6>-1$. This shows that the result of \cite{dowling2004energy,toth2005entanglement} is suboptimal for an odd number of qubits, as mentioned in \cite{toth2005entanglement}. The reason is that they use a partition into two sub-lattices such that every neighbouring sites belong to different sub-lattices, which is not possible for an odd number of qubits. 
		\begin{table}
		\begin{center}
	\begin{tabular}{|l|l|l|l|l|}\hline
$N$	&$E_0/N$ & $E_\text{min,PPT-all}/N$&$E_\text{max,gap}/N$&$\cal E$\\\hline
3&-1.000& -0.600	&-0.600&0.200\\\hline
4&-2.000&-1.000	&-0.660  &0.447 \\\hline
5&-1.494&-0.809&-0.809 &0.275 \\\hline
\end{tabular}
\caption{\label{table5}Results for the von Neumann entropy in the antiferromagnetic Heisenberg model ($J=-1$) with a magnetic field $B=3$ using all possible partitions}
\end{center}
\end{table}

		\begin{table}
		\begin{center}
	\begin{tabular}{|l|l|l|l|l|}\hline
$N$	&$E_0/N$ & $E_\text{min,PPT-all}/N$&$E_\text{max,gap}/N$&$\cal E$\\\hline
3&-1.000&-0.600&	-0.594&0.203\\\hline
4&-2.000&-1.000&	-0.360 &0.547 \\\hline
\end{tabular}
\caption{\label{table6}Results for the linear entropy in the antiferromagnetic Heisenberg model ($J=-1$) with a magnetic field $B=3$ using all possible partitions}
\end{center}
\end{table}

\begin{figure}
	\centering
	(a)\includegraphics[width=0.4\textwidth]{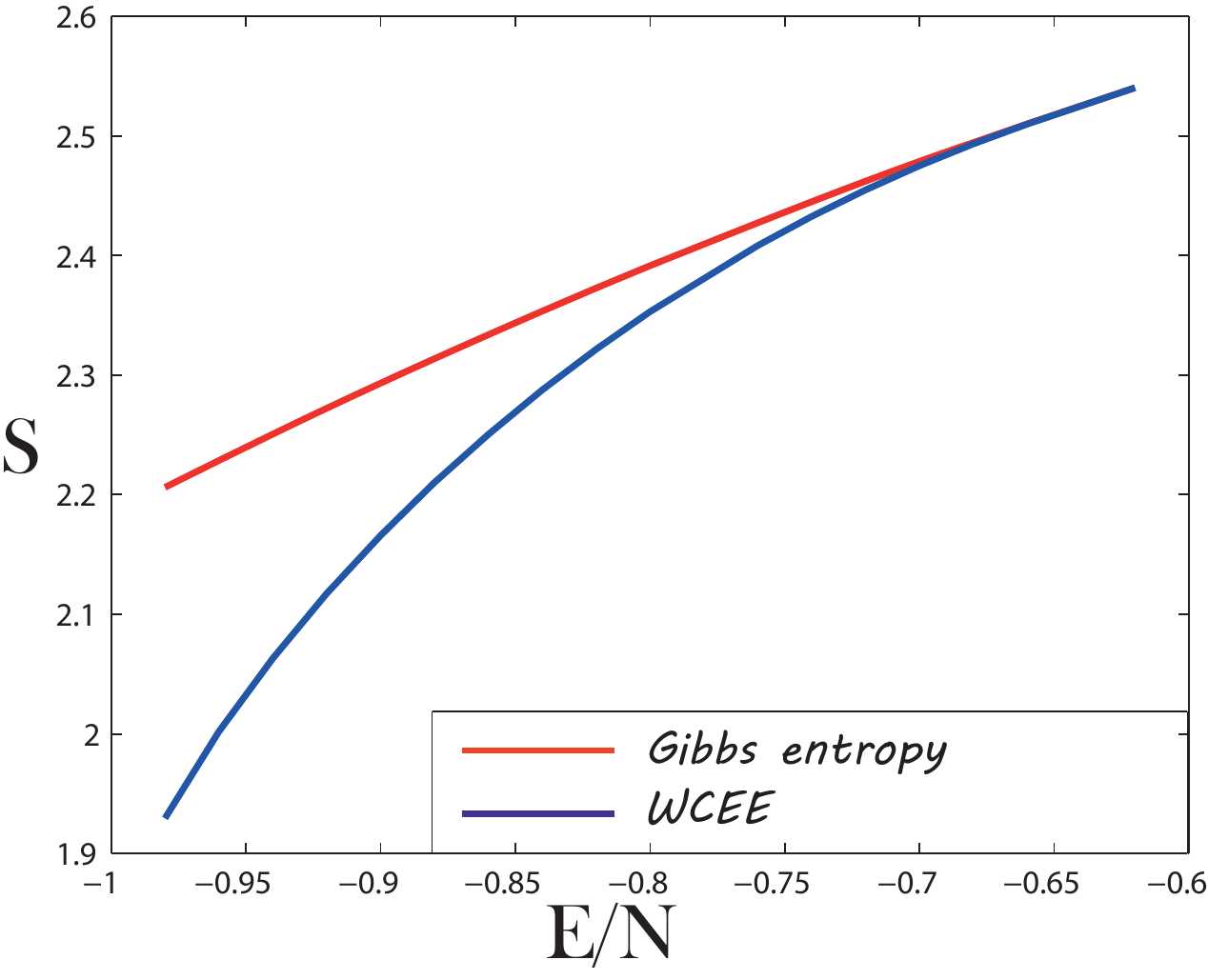}
	(b)\includegraphics[width=0.4\textwidth]{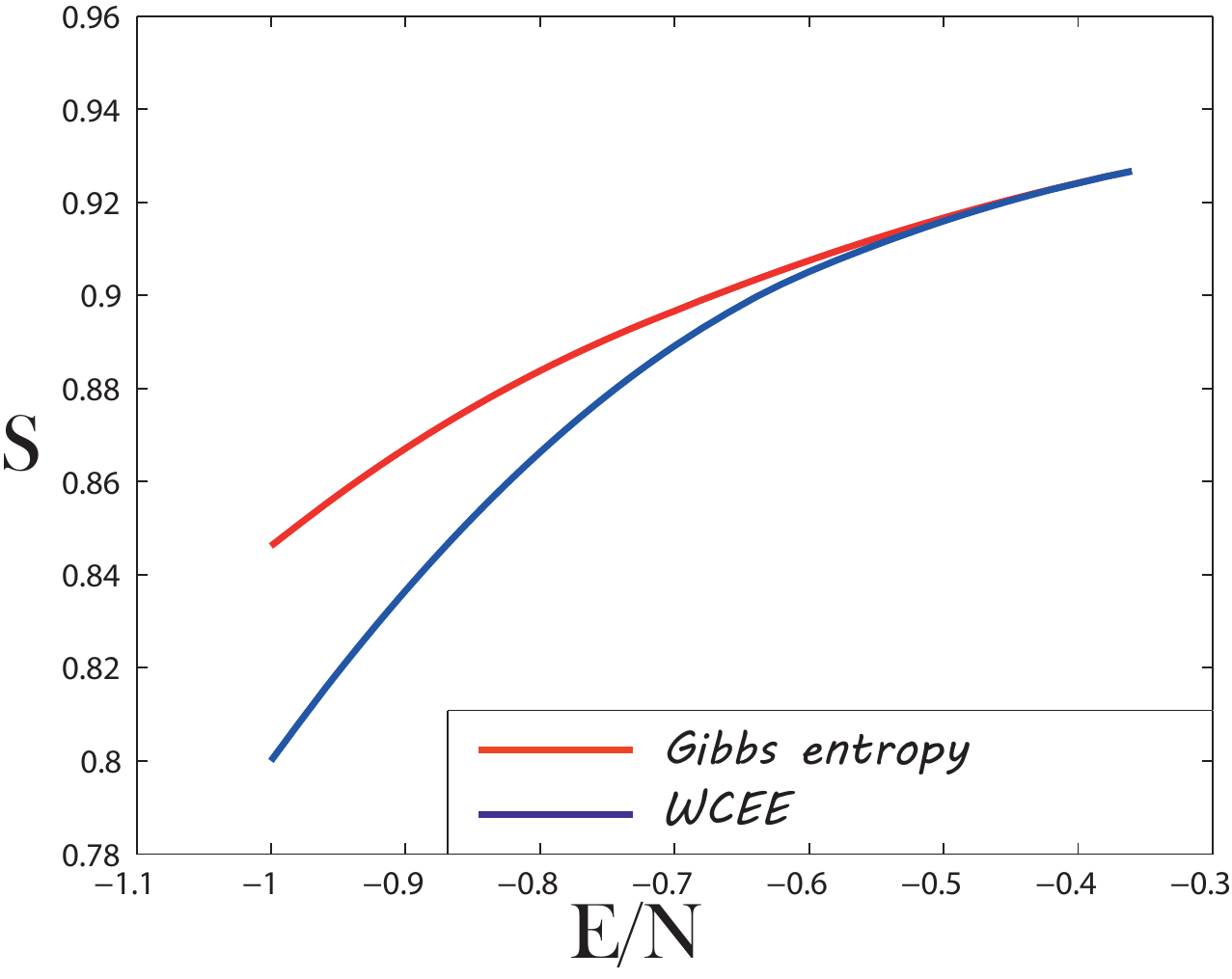}
	\caption{\label{plot:all}The entanglement gap for all possible partitions for the von Neumann (a) and linear (b) entropies in the antiferromagnetic Heisenberg model ($J=-1$) with a magnetic field $B=3$.}
\end{figure}

\subsubsection{Dicke states}
Dicke states were first considered in the theory of coherent spontaneous light emission \cite{PhysRev.93.99}. They are defined by
\be
\ket{D^n_m}=\binomial{n}{m}^{-\frac{1}{2}}\sum_\a\ket{d_\a},
\ee
where 
\be
\ket{d_\a}=\bigotimes_{i\notin\a}\ket{0}_i\bigotimes_{i\in\a}\ket{1}_i,
\ee
m is the number of excitations, $\a$ denote sets of indices of excited subsystems and the sum is taken over all inequivalent sets of $m$ indices.

More recently Dicke states turned out to be a useful resource for quantum information processing task as they are LOCC transferable to GHZ or W states \cite{kiesel2007experimental}. Several experiments have successfully created Dicke states, e.g. \cite{kiesel2007experimental,wieczorek2009experimental,prevedel2009experimental}.

In \cite{zhou2011ground}, it has been shown using perturbation theory that the approximate ground states of anisotropic ferromagnetic XXZ Heisenberg Hamiltonians (i.e. $J_x=J_y\ge J_z>0$) are Dicke states. For $B=0$ and $J_x=J_z>0$ the ground states are given by $\{\ket{D^n_m}\}_{m=0}^n$ with $n+1$-fold degeneracy, which makes entanglement at low energies unlikely. This is because the system will soon go into a mixture of the Dicke states, which is separable. For $J_x>J_z$ and $B\neq0$, however, the degeneracy vanishes. In first order perturbation theory $\ket{D^n_m}$ is the ground state for 
\begin{equation}\label{Bbounds}
-\frac{n-2m+1}{n-1}\Delta J<B<-\frac{n-2m-1}{n-1}\Delta J
\end{equation}
where $\Delta J=J_x-J_z$ is the anisotropy parameter.

A number of methods have been developed to detect genuinely multipartite entanglement in Dicke states \cite{toth2007detection,PhysRevLett.99.193602,campbell2009characterizing,PhysRevLett.103.100502,huber2011experimentally,Dickeref1,Dickeref2}, of which \cite{huber2011experimentally} is the most generally applicable one. It consists of a non-linear witness, i.e. an non-linear inequality that has to hold for every biseprable state. In order apply the witness only a polynomial (in the number of qubits) number of local measurements is necessary.

Combining \cite{huber2011experimentally} with the use of proxy witnesses, entanglement can be detected even more easily experimentally. To do so, let us introduce a (weaker) linear version of the non-linear witness given in \cite{huber2011experimentally}. It is given by
\begin{equation}\label{DickWit}
W^n_m=\frac{1}{2}\sum_{(\a,\beta)\in\g}\left(-\ket{d_\beta}\bra{d_\a}-\ket{d_\a}\bra{d_\beta}+\pro{d_{\a\cap\beta}}+\pro{d_{\a\cup\beta}}\right),
\end{equation}
where $\g=\left\{(\a,\beta):\left|\a\cap\beta\right|=m-1\right\}$. $W^n_m$ detects $\ket{D^n_m}$ in a maximal way. Using these witnesses as a constraint in (\ref{eq:constrained-entropy}), it is possible to detect entanglement for different choices of $J_x$ and $J_z$ and $B$ satisfying (\ref{Bbounds}). Results for 11 and 13 qubits are presented in table \ref{table7} and figures \ref{plot:DickeE} and \ref{plot:DickeT}.

		\begin{table}
		\begin{center}
	\begin{tabular}{|l|l|l|l|l|l|l|l|l|l|}\hline
$N$&	$m$&	$B$	&	$\Delta J$&	$E_0/N$	&$E_\text{min,gap}/N$	&$E_\text{max,gap}/N$&	$T_\text{min,gap}$	&$T_\text{max,gap}$&	$\cal E$\\\hline
11&	5	&-1	&10	&-13.753	&-13.599&	-12.890	&3.970&	7.930&	0.031\\\hline
11&	5	&-3		&20	&-26.707	&-26.299&	-24.698	&8.460&	16.420&	0.038\\\hline
11&	5	&-1		&20	&-26.525&	-26.344	&-24.802	&6.940&	15.850&	0.033\\\hline
13&	6	&-1		&12	&-16.276	&-16.149&	-15.577&	3.970&	7.930	&0.021\\\hline
\end{tabular}
\caption{\label{table7}Detection energies and entropies for the XXZ Hamiltonian with $J_z=1$ using (\ref{DickWit}) as witness.}
\end{center}
\end{table}

\begin{figure}

	\centering
	(a)\includegraphics[width=0.4\textwidth]{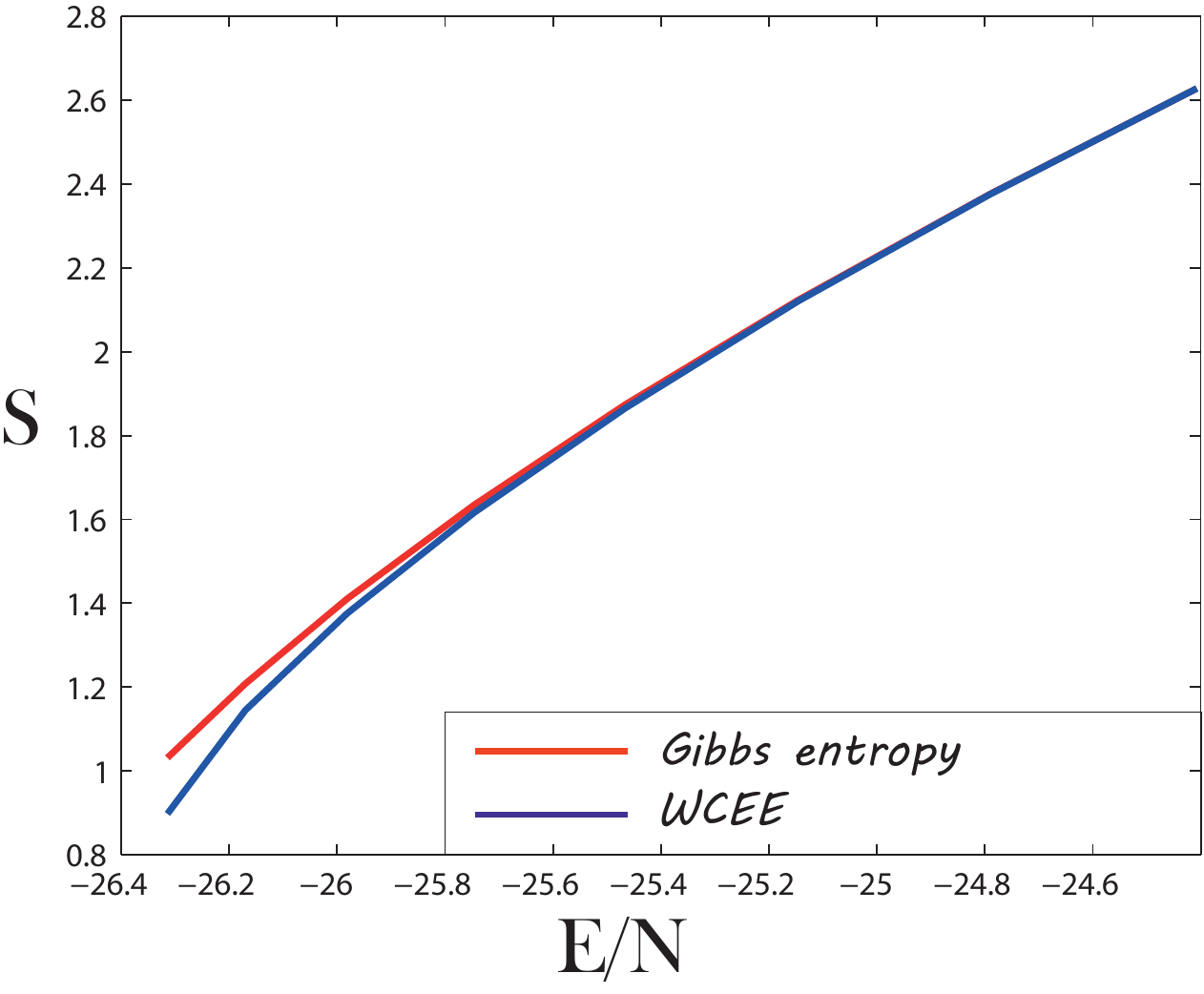}
	(b)\includegraphics[width=0.4\textwidth]{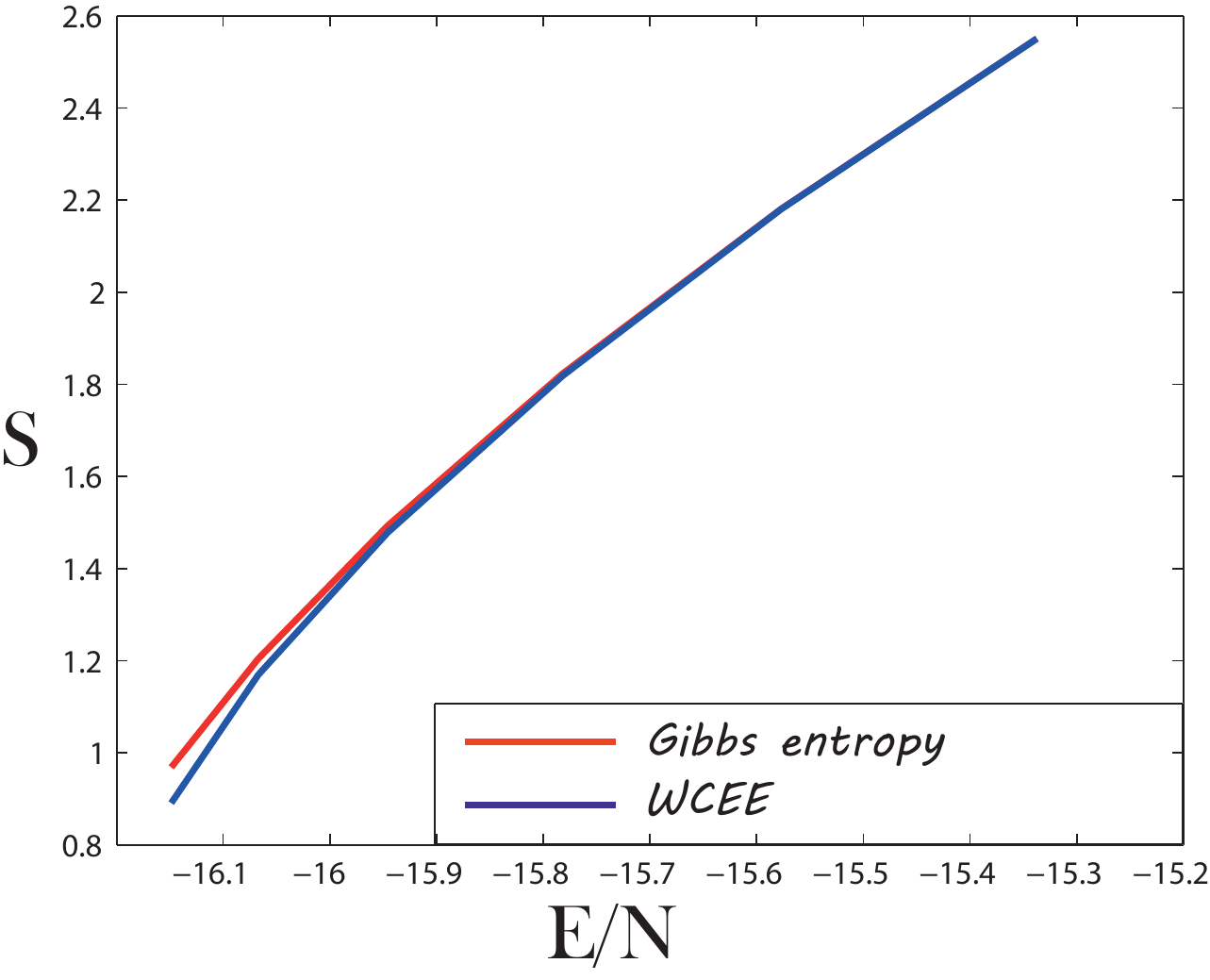}
	\caption{\label{plot:DickeE}Entropy gap plotted versus mean energy for (a) 11 qubits, $m=5$, $B=-3$ and $\Delta J=20$ and (b)  13 qubits, $m=6$, $B=-1$ and $\Delta J=12$.}
\end{figure}

\begin{figure}

	\centering
	(a)\includegraphics[width=0.4\textwidth]{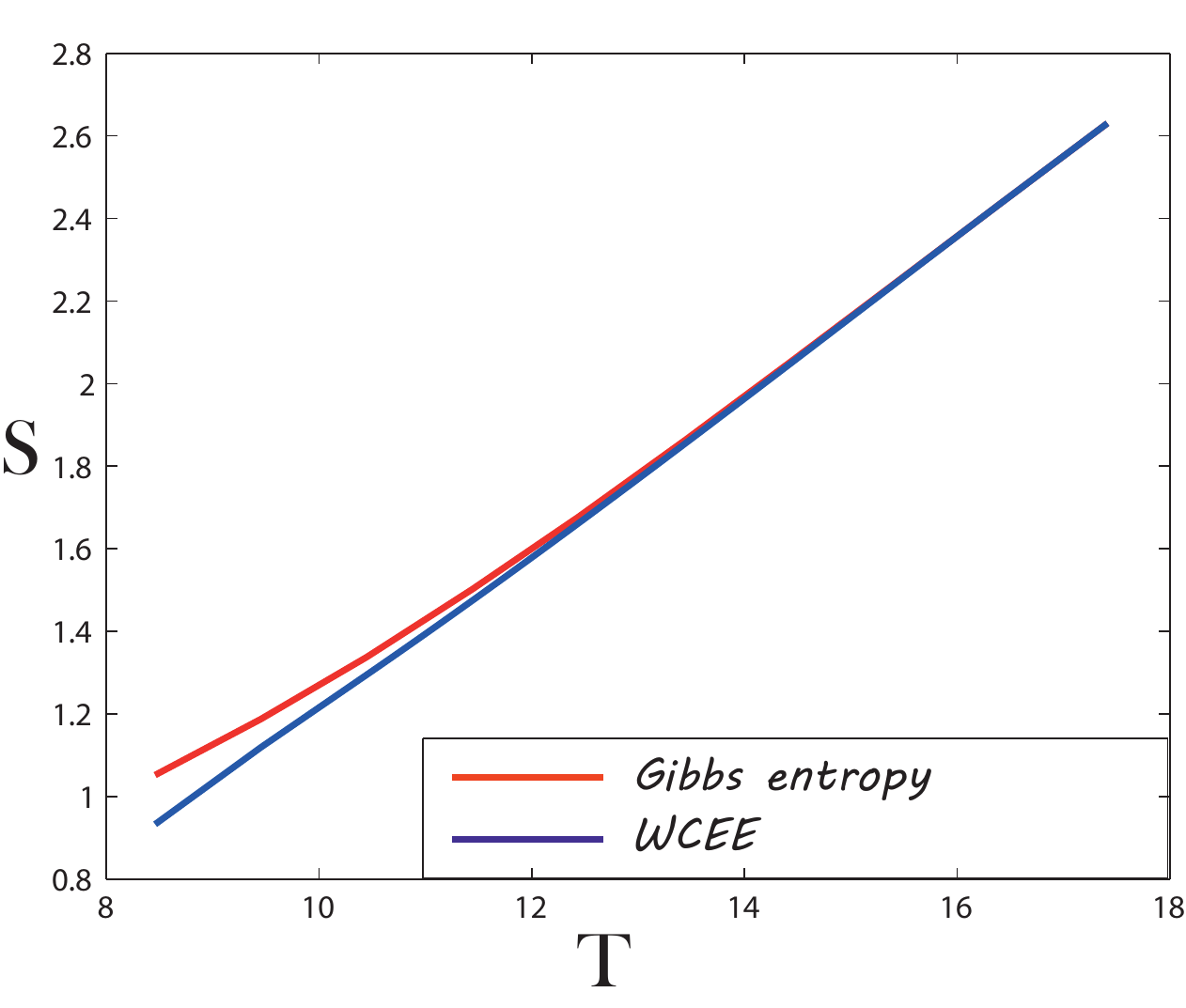}
	(b)\includegraphics[width=0.4\textwidth]{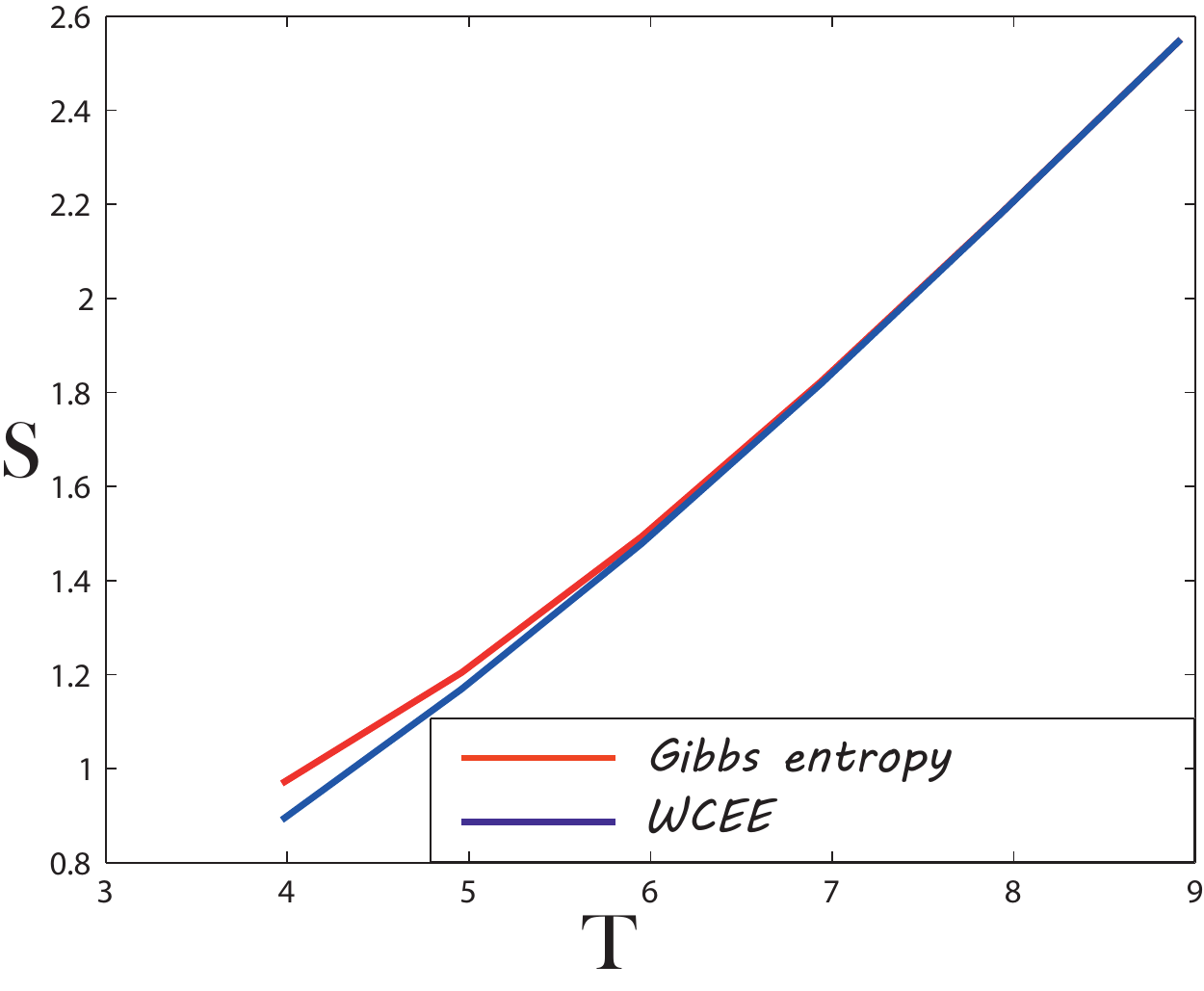}
	\caption{\label{plot:DickeT}Entropy gap plotted versus temperature for (a) 11 qubits, $m=5$, $B=-3$ and $\Delta J=20$ and (b)  13 qubits, $m=6$, $B=-1$ and $\Delta J=12$.}
\end{figure}

\section{To the thermodynamic limit}
\subsection{The ground state witness}
In order to extend the above results to the thermodynamic limit, let us now focus on witnesses, which are diagonal in the energy eigenbasis. A general example of such a witness is given as $W=\a\1-\pro{E_0}$, where $\ket{E_0}$ is the ground state of $H$ and $\a=\max_{\ket{\phi}, \ket{\varphi}}\left|\bra{E_0}\ket{\phi}\ket{\varphi}\right|^2$ \cite{guhne2009entanglement}. Instead of $\pro{E_0}$, one could also use any projector onto one or several energy eigenstates. The only obvious requirement being that the states are entangled themselves, otherwise the resulting operators would not be witnesses. We can then either maximise over a fixed bipartition, or over all possible ones. In the former case it is possible to detect bipartite entanglement w.r.t. the partition chosen, in the latter case we can detect genuinely multipartite entanglement. Inserting the witness into (\ref{ell}), we see that the exponent becomes diagonal.
\begin{align}
\tilde{\ell}(\mu,\nu)&=\ln\tr\exp\left(\sum_i\left(\m E_i+\n(\a-\delta_{i0})\right)\pro{E_i}\right)-\m E\\
&=\ln\sum_{i}\exp\left(\m E_i+\n(\a-\delta_{i0})\right)-\m E,
\end{align}
where $E_i$ and $\ket{E_i}$ are the energy eigenvalues and eigenvectors, respectively and $\delta_{i0}$ is the Kronecker delta. Setting the gradient equal to zero, we obtain
\begin{align*}
\frac{\partial}{\partial\m}\tilde{\ell}(\mu,\nu)=0\Rightarrow &(i)\;\sum_i\exp\left(\m E_i+\n(\a-\delta_{i0})\right)(E_i-E)=0\\
\frac{\partial}{\partial\n}\tilde{\ell}(\mu,\nu)=0\Rightarrow &(ii)\;\sum_i\exp\left(\m E_i+\n(\a-\delta_{i0})\right)(\a-\delta_{i0})=0,
\end{align*}
which, by convexity of $\tilde{\ell}$, are sufficient conditions for a minimum. Solving those transcendent equations w.r.t. the Lagrange variables is only possible numerically for small $N$. Still it is possible to obtain a result for the thermodynamic limit, as we will now demonstrate.

Let us first consider the case where $\a=1$, i.e. $H$ has a separable ground state. Since the first term in the sum of (ii) vanishes, (ii) cannot be fulfilled for any $\n\ge0$. Since $\tilde{\ell}$ increases with $\n$, the minimum is attained at $\n=0$. At $\n=0$, however, (\ref{ell}) is an upper bound on the unconstrained problem, namely
\[
\min_\m\ln\sum_i\exp(\m E_i)-\m E\ge S(\rho_{Gibbs})
\]
Since $S(\rho_{Gibbs})=\ln Z+\beta E$, the minimum is attained at $\m=-\beta$. In this case no entanglement is detected.

Let us now move on to the case where $\a<\frac{e^{-\beta E_0}}{Z}$. Assuming that the minimum is attained at $\n=0$, i.e. that (ii) is fulfilled at $\n\le0$, (ii) tells us that 
\[
\a=\frac{\exp(-\beta E_0+\n(\a-1))}{\sum_i\exp(-\beta E_i+\n(\a-\delta_{i0}))}=\frac{e^{-\beta E_0}}{\sum_i e^{-\beta E_i} e^{\n(1-\delta_{i0})}}=\frac{e^{-\beta E_0}}{e^{-\beta E_0}+\sum_{i>0}e^{-\beta E_i}e^\n}\ge\frac{e^{-\beta E_0}}{Z},
\]
with equality for $\n=0$, which is a contradiction. Hence the minimum is attained at $\n>0$. This implies that $\min_{\n\ge0,\m}\tilde{\ell}<\min_\mu\tilde{\ell}(0,\m)=S(\rho_{Gibbs})$, hence $S_{\text{max},\Lambda_A,W_i,E}<S(\rho_{Gibbs})$. 
\begin{theorem}\label{alphaResult}
Let $\SEP$ be the set of separable pure states w.r.t. some partitions and $\a=\max_{\ket{\phi}\in\SEP}\left|\bra{E_0}\ket{\phi}\right|^2$. Then, if $\a<\frac{e^{-\beta E_0}}{Z}$, the Gibbs state as well as any state with entropy $S_{\text{max},\Lambda_A,W_i,E}<S<S(\rho_{Gibbs})$ will be inseparable w.r.t. those partitions.
\end{theorem}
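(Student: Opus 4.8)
The plan is to evaluate the dual bound (\ref{ell}) for the single ground-state witness $W=\a\1-\pro{E_0}$ (with no positive-map terms $X_{A,\Lambda}$) and to show that its minimum over the dual variables lies strictly below the Gibbs entropy; the claim then follows at once from Corollary~\ref{th2}. Because $W$ is diagonal in the energy eigenbasis, the matrix exponential in (\ref{ell}) collapses to a scalar sum, so that $\tilde\ell(\m,\n)=\ln\sum_i\exp(\m E_i+\n(\a-\delta_{i0}))-\m E$ is an explicit convex function of the two scalars $\m\in\RR$ and $\n\ge0$. By weak duality, $S_{\text{max},\Lambda_A,W_i,E}\le\min_{\n\ge0,\m}\tilde\ell(\m,\n)$, and at the boundary $\n=0$ the inner minimisation over $\m$ returns $\min_\m\tilde\ell(\m,0)=S(\rho_{Gibbs})$, attained at $\m=-\beta$. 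Hence it suffices to prove the strict inequality $\min_{\n\ge0,\m}\tilde\ell<S(\rho_{Gibbs})$, i.e. that moving $\n$ off the boundary strictly lowers the objective.

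The key step is a one-line derivative computation at the Gibbs point. Writing $g(\n)=\min_\m\tilde\ell(\m,\n)$, which is convex, it is enough to show $g'(0)<0$. By the envelope theorem $g'(0)$ equals the partial $\n$-derivative of $\tilde\ell$ evaluated at the inner optimiser $\m=-\beta$, namely
\be
\left.\frac{\partial\tilde\ell}{\partial\n}\right|_{(\m,\n)=(-\beta,0)}
=\frac{\sum_i e^{-\beta E_i}(\a-\delta_{i0})}{\sum_i e^{-\beta E_i}}
=\a-\frac{e^{-\beta E_0}}{Z}.
\ee
The hypothesis $\a<\frac{e^{-\beta E_0}}{Z}$ makes this strictly negative, so $g'(0)<0$ and, by convexity, the minimiser $\n^\ast$ is strictly positive with $g(\n^\ast)<g(0)=S(\rho_{Gibbs})$. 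Equivalently one can argue by contradiction through the stationarity condition $(ii)$: were the minimum at some $\n\le0$, then $(ii)$ rearranges to $\a=\frac{e^{-\beta E_0}}{e^{-\beta E_0}+e^{\n}\sum_{i>0}e^{-\beta E_i}}$, which for $\n\le0$ is $\ge\frac{e^{-\beta E_0}}{Z}$, contradicting the hypothesis.

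Chaining the two inequalities gives $S_{\text{max},\Lambda_A,W_i,E}\le g(\n^\ast)<S(\rho_{Gibbs})$, so the entropy gap is strictly positive. Corollary~\ref{th2} then states precisely that every state with $S_{\text{max},\Lambda_A,W_i,E}<S(\rho)\le S_\text{Gibbs}$ — in particular the Gibbs state itself — is inseparable with respect to the partitions entering $\SEP$, which is the assertion of the theorem.

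The main obstacle I anticipate is not the computation but its justification at the boundary $\n=0$: since $\n$ is constrained to be nonnegative, optimality must be read as a KKT/directional-derivative condition rather than plain stationarity, and the argument must certify that the minimum does not sit on that boundary. The envelope theorem handles the coupling to $\m$ (it lets one freeze $\m=-\beta$ when differentiating in $\n$), and convexity of $g$ upgrades the negative one-sided derivative into a genuine strict decrease. One should also keep in mind the standing requirement that $\ket{E_0}$ be entangled, so that $\a<1$ and $W$ is a legitimate witness; this is consistent with the regime $\a<\frac{e^{-\beta E_0}}{Z}<1$.
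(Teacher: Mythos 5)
Your proposal is correct and takes essentially the same route as the paper: evaluate the dual function $\tilde\ell(\m,\n)$ for the diagonal ground-state witness, note $\min_\m\tilde\ell(\m,0)=S(\rho_{Gibbs})$ at $\m=-\beta$, and show that the hypothesis $\a<e^{-\beta E_0}/Z$ forces the optimal $\n$ to be strictly positive, so that by weak duality $S_{\text{max},\Lambda_A,W_i,E}<S(\rho_{Gibbs})$ and Corollary~\ref{th2} concludes. Your envelope-theorem computation $g'(0)=\a-e^{-\beta E_0}/Z<0$ is simply a cleaner, KKT-aware rendering of the paper's own contradiction argument via the stationarity condition (ii) (which you also reproduce), and it handles the boundary $\n=0$ more rigorously than the paper's phrasing does.
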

Note that while for $\a<\frac{e^{-\beta E_0}}{Z}$ entanglement of the Gibbs state can also be detected by applying the witness directly, the same is not true for states with entropy $S_{\text{max},\Lambda_A,W_i,E}<S<S(\rho_{Gibbs})$. Since such states always exist, this shows that our method generically works. Theorem \ref{alphaResult} can be applied in the thermodynamic limit. Since $E_0$ generally scales linear in $N$ while $Z$ does so exponentially, Theorem \ref{alphaResult} can yield non-trivial results where $\a=\mathcal{O}(e^{-N})$. 
\subsection{Example}
We consider a special case of the Heisenberg Hamiltonian, given by
\[
H=-\sum_{i=1}^N\left(\frac{1+r}{2}\s^x_i\s^x_{i+1}+\frac{1-r}{2}\s^y_i\s^y_{i+1}+h\s^z_i\right).
\]
This is known as the \textit{XY} model in a transverse magnetic field. If $r=1$, we talk about the \textit{Ising} model, if $r=0$, the \textit{XX} model. The XY model undergoes a phase transition at $h=1$ \cite{osborne2002entanglement}. The ground state energy as well as the partition function have been computed in the thermodynamic limit by \cite{katsura1962statistical}. According to a conjecture numerically tested in \cite{wei2005global}, $\a$ w.r.t. full separability is given by
\[
\lim_{N\to\infty}\frac{\ln\a}{N}=2\max_{\xi}\int_0^{\frac{1}{2}}d\m \ln \left|\cos\theta\cos^2\frac{\xi}{2}+\sin\theta\sin^2\frac{\xi}{2}\cot\pi\m\right|,
\]
where $\tan 2\theta=\frac{r\sin 2\pi\m}{h-\cos{2\pi\m}}$ and $-\frac{\pi}{2}\le\theta\le\frac{\pi}{2}$. Using this, we can numerically show entanglement for a wide range of $T$ and $h$. See Figure \ref{XY_fig}. In particular we can show entanglement in vicinity of the phase transition, in accordance with \cite{osborne2002entanglement}. 

\begin{figure}

	\centering
	(a)\includegraphics[width=0.6\textwidth]{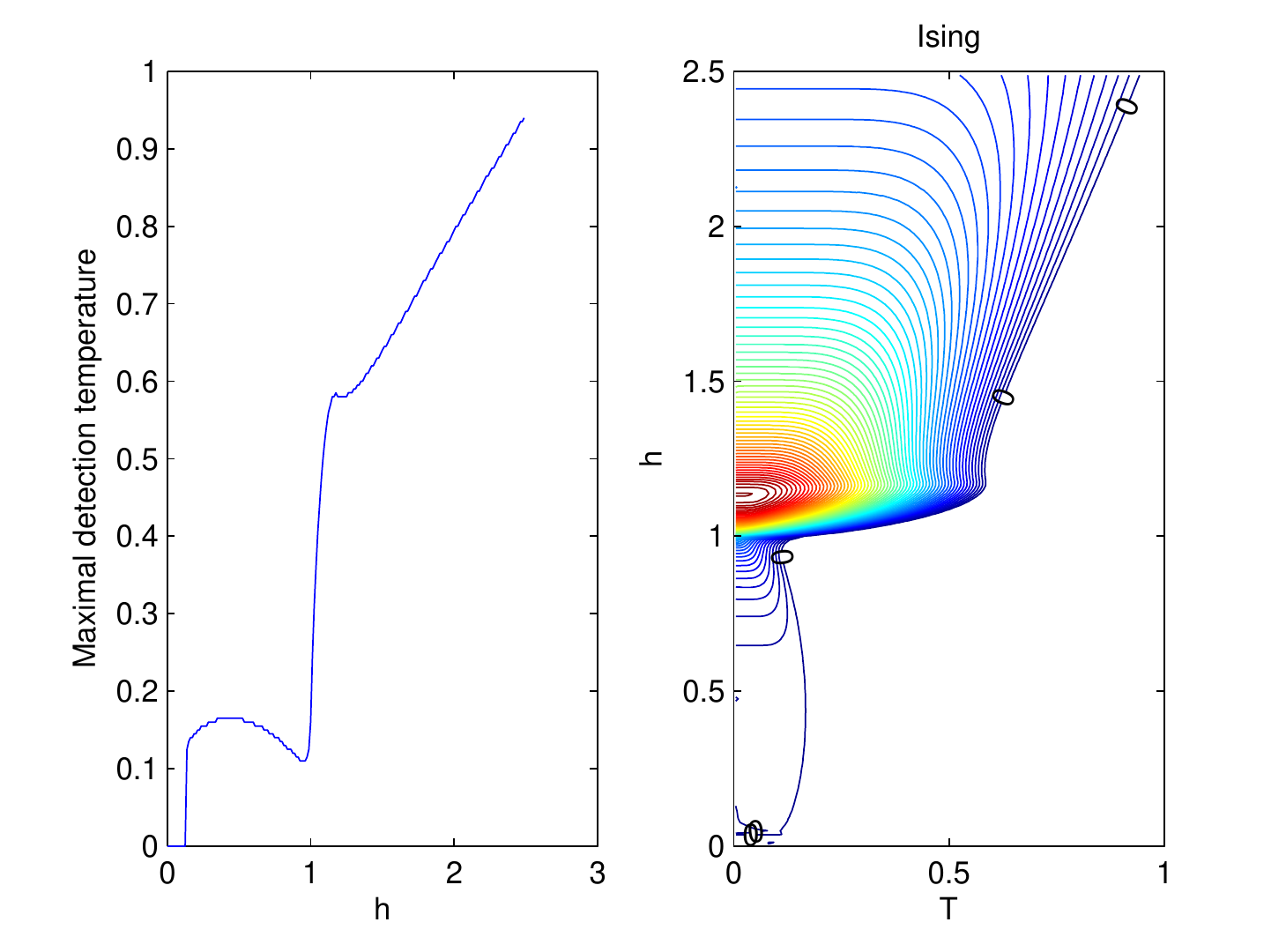}
	(b)\includegraphics[width=0.6\textwidth]{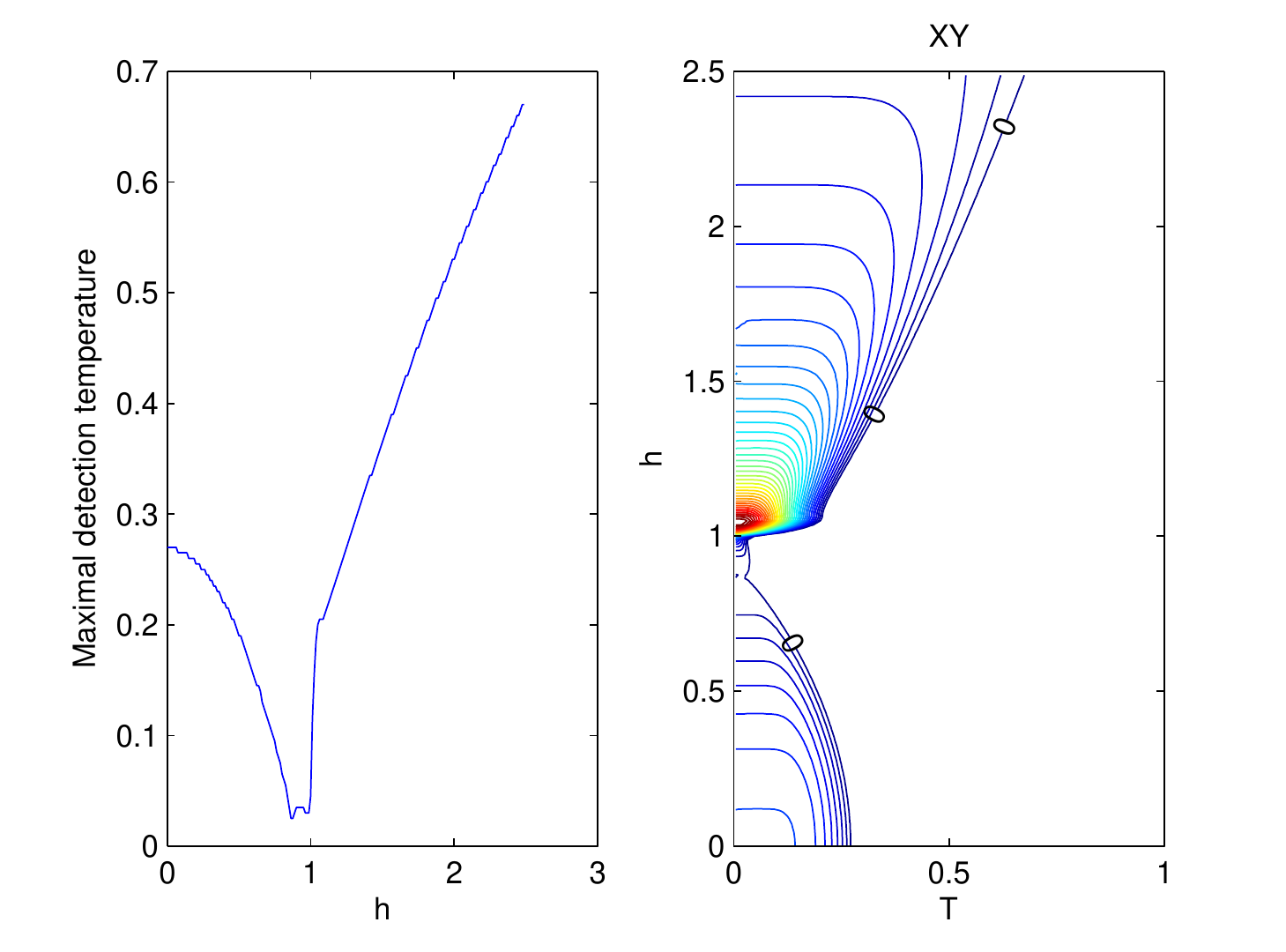}
	(c)\includegraphics[width=0.6\textwidth]{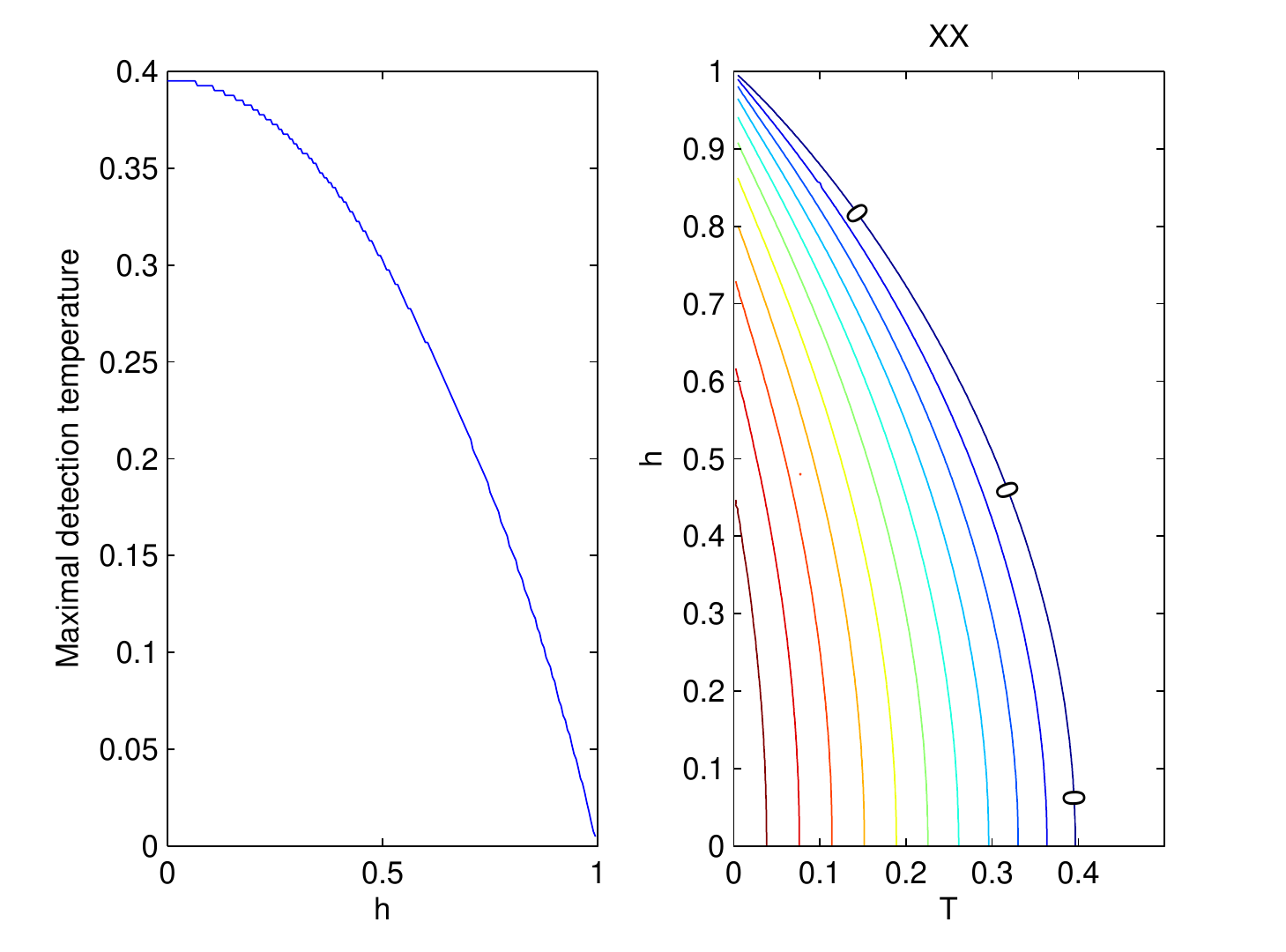}
	\caption{\label{XY_fig}Our results for the (a) Ising, (b) XY with $r=0.5$ and (c) XX models. The l.h.s. show the maximum temperature at which Theorem \ref{alphaResult} yields non-trivial results. The r.h.s is a contour plot of the positive part of $\lim_{N\to\infty}\left(-\frac{\ln\a}{N}-\frac{\beta E_0}{N}-\frac{\ln Z}{N}\right)$. We can show entanglement where this expression is positive, i.e. left of the zero line.}
\end{figure}

 \subsection{Arbitrarily large Gap}
Unfortunately the entropy gap closes in the limit when using the regular ground state witnesses, a fact which is related to the lack of robustness
of the ground state witness itself. It would be most interesting to find other, asymptotically robust witnesses, which could result in 
a macroscopic entropy gap. Let us now present an artificial example of a Hamiltonian which, with the right choice of witness, allows for an arbitrarily big entropy gap in the thermodynamic limit, thus proving the feasibility of this endeavour in principle.

The Hamiltonian is given by
\be
H=\sum_{k=0}^{d^2-1}k\pro{\Psi_k},
\ee
where $d=2^\frac{n}{2}$ and  $\{\ket{\Psi_k}\}_{k=0}^{d^2-1}$ are a Bell state basis of an $n$-qubit Hilbert space. The basis is ordered in such a way that $\ket{\Psi_k}=\frac{1}{\sqrt{d}}\sum_{i=0}^{d-1}\ket{i}\ket{i+k}$ for $k=0...d-1$. Note that these first $d$ basis elements have orthogonal support. As witness, we choose
\be
W=\alpha\1-\sum_{k=0}^{d-1}\pro{\Psi_k},
\ee
which, again, is diagonal in the energy eigenbasis. $\alpha$ is the defined as the maximal overlap of the projector $\sum_{k=0}^{d-1}\pro{\Psi_k}$ with a separable state. As all the $\ket{\Psi_k}$ have orthogonal support, $\alpha$ is the maximal overlap of any of the $\ket{\Psi_k}$ with a separable state, hence $\alpha=\frac{1}{d}$ because the $\ket{\Psi_k}$ are maximally entangled. Inserting this into (\ref{eq:EntropyGapBound}), we obtain
\be
\Delta S\ge\ln\frac{\sum_{k=0}^{d^2-1}e^{-\beta k}}{\sum_{k=0}^{d-1}e^{-\beta k}e^{-\nu(1-\frac{1}{d})}+\sum_{k=d}^{d^2-1}e^{-\beta k}e^{\frac{\nu}{d}}}.
\ee
Letting $n$, hence $d$, go to infinity, we obtain
\be
\lim_{n\to\infty}\Delta S\ge\nu,
\ee
where we have used the fact that $\lim_{d\to\infty}\sum_{k=0}^{d^2-1}e^{-\beta k}=\lim_{d\to\infty}\sum_{k=0}^{d-1}e^{-\beta k}=\frac{e^\beta}{e^\beta-1}$. Note that $\nu$ can be chosen arbitrarily large.
\section{Discussion}
\subsection{Robustness}
In macroscopic systems assuming the exact form of the Hamiltonian is 
always an idealisation. While it is fair from a physical 
point of view one might wonder about the impact of mis-characterised Hamiltonians 
on our entropy gap $\Delta S$. If we assume that the real Hamiltonian is given as $\tilde{H}=H+\epsilon P$ we can that the dual entropy of the actual Hamiltonian for the chosen parameters 
\begin{align}
  \label{eq:max-S-W-dual-P}
  \tilde{S}_{\text{max},W,E}(\mu,\nu) =  \log\tr\exp(-\mu H -\mu\epsilon P+ \nu W) + \mu E+\mu\epsilon \text{Tr}(\rho P) \,,
\end{align}
can be bounded from above using the Golden-Thompson $\text{Tr}(e^{A+B})\leq \text{Tr}(e^{A})\text{Tr}(e^{B})$ and H\"older's $||AB||_1\leq||A||_\infty||B||_1$ inequalities to yield
\begin{align}
  \tilde{S}_{\text{max},W,E}(\mu,\nu) \leq  S_{\text{max},W,E}(\mu,\nu) +\mu\epsilon \text{Tr}(\rho P) -\mu\epsilon P_o\,.
\end{align}
This shows that small perturbations or inaccuracies in the description of the Hamiltonian will only have a correspondingly small impact on the validity of the entanglement certification by \emph{proxy witnesses}.

\subsection{Experimental estimation of entropy}

While the mean energy and the entropy are both macroscopic properties 
of quantum states, measurements of the latter are not possible 
directly as they do not correspond to a quantum observable. There are 
however various ways that the global entropy can be determined, and 
all we need is a lower bound on the entropy. The most straightforward 
way would of course consist of equilibrating the system with a thermal 
bath at temperature $T$. This is naturally the case, as any system found
in nature at ambient temperature $T$, will most accurately be described 
by its corresponding Gibbs state. Thus, once the system is equilibrated we know that 
its corresponding entropy should correspond to the Gibbs entropy 
$S\bigl(\rho_{\text{th}}(\beta(E))\bigr)$. Starting from equilibrated 
systems one can introduce global quenches of system parameters, such 
that the entropy will still be bounded from below by the initial Gibbs 
entropy due to the second law of thermodynamics, despite the system being far out of equilibrium.
In that way one directly receives a lower bound on the system entropy for a wide range 
of out of equilibrium systems and experimental preparations.
There are of course other methods, such as reasonable assumptions about symmetry in the state, that can be used 
to estimate linear entropy from macroscopic spin observables (as would be possible e.g. in Ref.\cite{linent}).
Another straightforward way to obtain lower bounds would be access to a $d\times d$ sub-matrix of the global state,
whose entropy will always yield a lower bound to the global entropy. The exact ways of experimentally estimating entropies
is of course highly dependent on the experimental setup, access and reasonable assumptions about system properties. As any
non-zero amount of entropy already provides an advantage in entanglement detection it is however fairly straightforward to
infer correspondingly useful entropies.

This highlights the main advantage of our approach: While entanglement 
of Gibbs states could be directly inferred from its description, 
it would strictly work only if we know that the state is in fact in
thermal equilibrium. 
This would require strict assumptions about the state of the system and exact 
characterization of its Hamiltonian. Using our entropy gap there is 
no need whatsoever to assume any particular form of the system's state: 
measurements of the mean energy and lower bounds on the global 
entropy are completely sufficient to prove that the underlying system 
is entangled in a robust way. We believe that future work will
uncover witnesses robust enough to be amenable to the proxy method,
yielding a sizeable entropy gap in the thermodynamic limit.
\section{Conclusion}
We have introduced a framework in which concave functions can be used as a \emph{proxy} for detecting entanglement in many body systems. As the most relevant example we have explored the use of entropy in this context, yielding \emph{separable Gibbs ensembles} and a corresponding \emph{entropy gap} that can be used for entanglement detection. Through the efficient description through \emph{witness canonical ensembles} this unlocks powerful tools from entanglement theory, such as positive maps or entanglement witnesses to be harnessed in situations where the actual estimation of said quantities is experimentally impossible.

We hope that our work can contribute to the further understanding of the relationship between important physical aspects of many-body systems, such as phase transitions, and the paradigmatic feature of Quantum Information Theory, entanglement \cite{Amico2008}. We address some possible new avenues in entanglement detection, by indicating the importance of an open conjecture regarding the geometric measure of entanglement for ground states \cite{wei2005global} and by specifying particularly useful forms of entanglement witnesses in many-body systems.

We have furthermore demonstrated the usefulness and feasibility of this approach in exemplary and paradigmatic physical Hamiltonians. For moderate system size we have demonstrated how our results directly improve upon previous work on inferring entanglement from macroscopic observables. The resulting entropy gaps quantify to what extent even out-of-equilibrium systems can be certified to exhibit entanglement. In the thermodynamic limit we have shown that particularly promising paths towards this goal are strongly connected to ground state properties of many-body Hamiltonians. Our results in this context furthermore elucidate how a development of robust entanglement witness techniques can prove useful, even if they themselves remain inaccessible due to experimental limitations.

\bigskip
\textbf{Acknowledgements}
The authors are pleased to acknowledge enlightening discussions on
many-body systems and quantum thermodynamics with Janet Anders, 
Mart\'{\i} Perarnau and Anna Sanpera. 
SB would like to thank Johann L\"ofberg for his 
help including nonlinear objectives into SDPs.
SB and AW are supported by the European Research Council (Advanced Grant ``IRQUAT''). 
AW and MH were supported by the European Commission (STREP ``RAQUEL''),  
by the Spanish MINECO, projects FIS2008-01236 and FIS2013-40627-P, 
with the support of FEDER funds, and by the Generalitat de Catalunya CIRIT,
project 2014-SGR-966. 
MH furthermore acknowledges funding from the Juan de la Cierva fellowship (JCI 2012-14155). 
DB and HK acknowledge support from Deutsche Forschungsgemeinschaft (DFG) and Bundesministerium f\"ur
Bildung und Forschung (BMBF).

\bibliographystyle{unsrtnat}

\bibliography{proxy}

\end{document}